\documentclass[reqno,12pt]{amsart}

\usepackage[T1]{fontenc}
\usepackage[utf8]{inputenc}
\usepackage[top=0.9in,left=0.95in,right=0.95in,bottom=0.9in]{geometry}
\usepackage{amsmath,amssymb,amsthm,graphicx,mathrsfs}
\usepackage{bm}
\usepackage{braket}
\usepackage{dsfont}
\usepackage{mathbbol}
\usepackage{extarrows}
\usepackage{subcaption}
\usepackage{enumitem}
\usepackage[dvipsnames]{xcolor}
\usepackage{tikz}
\usepackage[unicode=true,pdfusetitle,bookmarks=true,bookmarksnumbered=false,bookmarksopen=false,breaklinks=false,pdfborder={0 0 1},backref=false,colorlinks=true,linkcolor=violet,citecolor=YellowOrange!85!black,urlcolor=Aquamarine!85!black]{hyperref}
\usepackage[nameinlink]{cleveref}

\numberwithin{equation}{section}
\DeclareSymbolFontAlphabet{\amsmathbb}{AMSb}

\newcommand{\safeincludegraphics}[2][]{%
  \IfFileExists{#2}{\includegraphics[#1]{#2}}{%
    \fbox{\parbox{0.82\linewidth}{\centering Missing figure file:\\[2pt]\texttt{\detokenize{#2}}}}%
  }%
}

\newcommand{\Range}{{\operatorname{Range}}}

\newcommand{\R}{\amsmathbb{R}}

\newcommand{\N}{\amsmathbb{N}}
\newcommand{\Hh}{\amsmathbb{H}}
\newcommand{\1}{\mathds{1}}

\newcommand{\Bb}{\amsmathbb{B}}
\newcommand{\Z}{\amsmathbb{Z}}

\newcommand{\C}{\amsmathbb{C}}

\newcommand{\p}{{\partial}}

\newcommand{\Span}{{\operatorname{Span}}}
\newcommand{\Tr}{{\operatorname{Tr}}}

\newcommand{\lr}[1]{\langle #1 \rangle}

\newcommand{\XX}{\mathcal{X}}
\newcommand{\GG}{\mathcal{G}}
\newcommand{\EE}{\mathcal{E}}
\newcommand{\II}{\mathcal{I}}
\newcommand{\UU}{{\mathcal{U}}}
\newcommand{\PP}{{\mathcal{P}}}

\newcommand{\QQ}{{\mathcal{Q}}}

\newcommand{\AAA}{{\mathcal{A}}}
\newcommand{\NN}{\mathcal{N}}
\newcommand{\supp}{\mathrm{supp}}
\newcommand{\BBB}{\mathcal{B}}
\newcommand{\HS}{{\operatorname{HS}}}

\newcommand{\systeme}[1]{\left\{ \begin{matrix} #1 \end{matrix} \right.}

\newcommand{\de}{ \ \mathrel{\stackrel{\makebox[0pt]{\mbox{\normalfont\tiny def}}}{=}} \ }
\newcommand{\Spec}{\operatorname{Spec}}
\newcommand{\Int}{\operatorname{int}}

\newcommand{\MG}{{\mathcal G}}

\newcommand{\br}[1]{\left(#1\right)}
\newcommand{\Id}{\mathds{1}}

\newcommand{\bbL}{\mathbb{\Lambda}}

\crefname{section}{\S}{\S}
\crefname{equation}{}{equations}
\Crefname{equation}{Equation}{Equations}
\crefrangelabelformat{equation}{(#3#1#4--#5#2#6)}
\crefmultiformat{equation}{(#2#1#3}{, #2#1#3)}{#2#1#3}{#2#1#3}
\Crefmultiformat{equation}{(#2#1#3}{, #2#1#3)}{#2#1#3}{#2#1#3}
\title[Geometric bulk-edge correspondence for Z2-topological insulators]{Geometric bulk-edge correspondence for $\Z_2$-topological insulators}
\author{Alexis Drouot}
\address[Alexis Drouot]{University of Washington, Seattle, USA.}
\email{adrouot@uw.edu}
\author{Jacob Shapiro}
\address[Jacob Shapiro]{Princeton University, Princeton, USA.}
\email{shapiro@math.princeton.edu}
\author{Xiaowen Zhu}
\address[Xiaowen Zhu]{University of Minnesota, Twin Cities, Minnesapolis, USA}
\email{xwzhu@umn.edu}

\theoremstyle{plain}
\newtheorem{thm}{Theorem}
\newtheorem{lemma}{Lemma}[section]

\newtheorem{corollary}[lemma]{Corollary}
\newtheorem{proposition}{Proposition}
\newtheorem{theorem}[thm]{Theorem}

\theoremstyle{definition}
\newtheorem{definition}{Definition}

\newtheorem{remark}{Remark}[section]

\begin{document}
\begin{abstract}
Fermionic time-reversal-invariant insulators in two dimensions---class AII in the Kitaev table---come in two topological phases. These phases are characterized by a $\Z_2$-valued invariant, the Fu--Kane--Mele index. We prove a geometric bulk-edge correspondence for curved interfaces: if two such insulators occupy complementary regions separated by a curved boundary, then the $\Z_2$ edge index of the interface system is the product, modulo two, of the difference of the two bulk $\Z_2$ indices and a geometric intersection number associated with the boundary and the measurement region. The argument is a $\Z_2$ analogue of the curved-interface connection formula proved for Hall insulators in \cite{DZ24}.
\end{abstract}
\maketitle

\section{Introduction}

Topological insulators are remarkable phases of matter that block conduction in their bulk but support robust currents along their boundaries. They are characterized by a (non-zero) topological marker that, strikingly, is related to the conductance of the edge. This principle is called the bulk-edge correspondence.

The original examples of topological phases of matter are Hall insulators, in connection to the famous quantum Hall effect. Their Hall conductance is a quantized topological quantity, and the bulk-edge correspondence asserts that it is equal to the conductance of boundaries. In particular, one-way currents emerge when this index is non-zero. This can only happen when time-reversal invariance is broken, for instance by a magnetic field.

The search for novel topological phases of matter has since shifted to systems with time-reversal invariance. Building on their work on quantum spin phases \cite{KM05,FKM07,P09}, Fu--Kane--Mele discovered $\Z_2$-insulators, a class of materials with fermionic time-reversal invariance and a topological index with values modulo $2$.

Whether for Hall or $\Z_2$-insulators, the bulk-edge correspondence has mostly been analyzed in the context of straight interfaces \cite{EG02,EGS05,GS16,FSSWY20,BC24}. More recently, this has changed through a series of works \cite{LT22,DZ23,DZ24,DSZ24} which studied this principle for Hall insulators and curved edges. It culminated with a formula relating a geometry-dependent intersection number, the Hall conductance, and the edge conductance in \cite{DZ24}. The objective of this work is to carry out an analogous program for $\Z_2$-insulators. We prove that the curved $\Z_2$ edge index
    \begin{align}\label{eq-1a}
        \NN(H_e) = \XX_{\Omega,W} \cdot \big( \II(H_+) - \II(H_-)\big) \mod 2.
    \end{align}
In \eqref{eq-1a}, $H_+$ and $H_-$ are the two bulk Hamiltonians, $H_e$ is an interface Hamiltonian modeled on $H_+$ in $\Omega$ and on $H_-$ in $\Omega^c$, $\NN(H_e)$ is the $\Z_2$ edge index measured in a set $W$ transverse to $\Omega$, and $\XX_{\Omega,W}$ is the corresponding geometric intersection number.

We organize the paper as follows:
\begin{itemize}
    \item We review the general theory of $\Z_2$-indices, originating in \cite{Atiyah1969}, including the definitions and algebraic and analytic properties, see \S \ref{sec-2};
    \item We list and prove key properties of the curved $\Z_2$ indices $\NN(H_e)$ and $\II(H_\pm)$, including additivity in $\Omega$ and $W$ and invariance under compact deformation, see \S \ref{sec-3};
    \item We reduce the edge index to geometric $\Z_2$-indices using Fredholm theory, see \Cref{thm-2} and \S \ref{sec-4};
    \item We adapt the geometric framework developed in \cite{DZ24} (for Hall insulators) to deal with curved interfaces, see \Cref{thm-3} and \S \ref{sec-5}.
\end{itemize}

\subsection{Time-reversal-invariant insulators}
Let $H$ be a bounded self-adjoint Hamiltonian on $\ell^2(\Z^2,\C^d)$.
\begin{definition}\label{def:time reversal map} A \textit{(Fermionic) time-reversal map} $\Theta$ is an antiunitary operator\footnote{An operator $U$ on a Hilbert space $\mathcal H$ is antiunitary if it is a bijective antilinear map such that $\lr{Ux, Uy} = \lr{y, x} = \overline{\lr{x, y}}$ for all $x, y \in \mathcal H$.} on $\ell^2(\Z^2,\C^d)$ such that $\Theta^2 = -\Id$ and $[\Theta,x_j] = 0$ for $j=1,2$, where $x_j$ is multiplication by the $j$th coordinate on $\Z^2$.

We say a Hamiltonian $H$ is \textit{$\Theta$-invariant} if $[H, \Theta] = 0$. We say $H$ is \textit{time-reversal invariant} if it is $\Theta$-invariant for some time-reversal operator $\Theta$.
\end{definition}
\begin{definition}  We say that $H$~is:
\begin{itemize}
    \item \textit{exponentially short-range (ESR)} if its kernel $H(x,y)$ satisfies, for some $\nu > 0$,
\begin{align}
    | H(x,y)| \leq \nu^{-1} e^{-2\nu|x-y|}, \qquad x,y \in \Z^2;
\end{align}
    \item \textit{polynomially short-range (PSR)} if its kernel $H(x,y)$ satisfies, for every $N>0$, there is $C_N$ such that
    \begin{align}
        | H(x,y)| \leq C_N |x - y|^{-N}, \qquad x, y \in \Z^2;
    \end{align}
    \end{itemize}
\end{definition}

Time-reversal-invariant insulators come in two topological phases, characterized by a $\Z_2$-valued index. It was first constructed by Fu--Kane--Mele \cite{KM05,FKM07,Fu_Kane_2007} for translation-invariant Hamiltonians. Their definition was later extended by Schulz--Baldes \cite{SB_2015} without spatial symmetry assumptions. Here, we use a definition of the index due to Kitaev \cite{Kitaev2009}. This is the $\Z_2$-analogue of the Hall conductance/bulk index used in \cite{DZ23, DZ24}.

\begin{definition}  The $\Z_2$ index of a local, $\Theta$-invariant Hamiltonian $H$ with spectral gap $\GG$ is
\begin{equation}\label{eqdef-bulk}
\II(H) = \dim \ker \big(\1_{\Hh_1} \cdot e^{-2\pi i \1_{\Hh_2} \Pi \1_{\Hh_2}} \cdot \1_{\Hh_1}+\1_{\Hh_1^c}\big) \mod 2, \qquad \text{where}
\end{equation}
\begin{itemize}
    \item $\Pi = \1_{(-\infty,\inf \GG]}(H)$ is the spectral projection below the spectral gap $\GG$;
    \item $\Hh_1,\Hh_2$ are, respectively, the right and upper half-planes.
\end{itemize}
\end{definition}
\begin{figure}
    \centering
    \safeincludegraphics[width=0.7\linewidth]{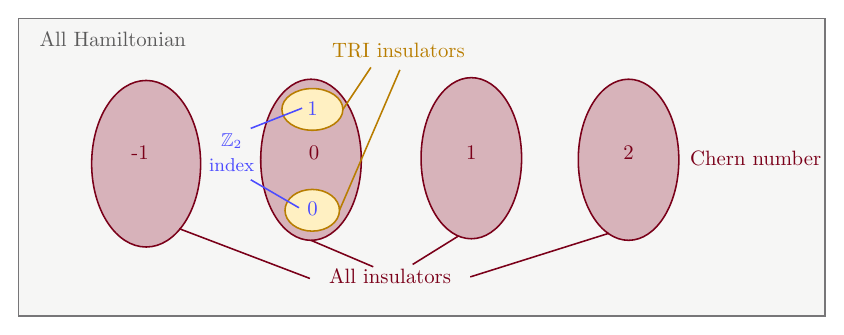}
    \caption{The gray box denotes short-range self-adjoint Hamiltonians. The maroon ellipses denote insulating Hamiltonians, which can be classified by Chern numbers. The gold ellipses denote time-reversal-invariant insulators. These have vanishing Chern number but may have different $\Z_2$ indices, indicated in blue.}
    \label{fig:classification}
\end{figure}
In the rest of this paper, we use the notation ``$\equiv$'' for equality modulo $2$, and
\[
U_\Omega := e^{-2\pi i \1_\Omega \Pi \1_\Omega}, \qquad \bbL_W U_\Omega := \1_{W^c} + \1_W U_\Omega \1_W
\]
for $\Omega, W\subset \Z^2$. Under these notations,
\begin{align}
\II(H) \equiv \dim \ker (\bbL_{\Hh_1}U_{\Hh_2}).
\end{align}
This definition has a clearer structure in the general framework of \S \ref{sec-2}: the $\Z_2$ index is defined for a $\Theta$-admissible pair consisting of a projection $\PP=\1_W$ and a unitary operator $\UU=U_\Omega$. In \S \ref{sec-3}, we apply this framework to time-reversal-invariant Hamiltonians.

\subsection{Interface systems} Suppose we have two time-reversal invariant insulators $H_\pm$ with distinct $\Z_2$-index, one occupying $\Omega\subset \R^2$, one occupying $\Omega^c$. We consider the resulting interface Hamiltonian which connects these two insulators.

\begin{definition}\label{def:DW edge Hamiltonian} Let $\Omega \subset \R^2$ and $H_\pm$ be two ESR Hamiltonians that are $\Theta$-invariant and insulating in $\GG$. An \textit{interface Hamiltonian $H_e$ that is compatible with $H_\pm$ and $\Omega$} is an ESR self-adjoint bounded operator $H_e$ such that for some $\nu>0$,
\begin{equation}
    \forall x,y\in\Z^2, \qquad |E(x,y)|\leq\nu^{-1} e^{-\nu d\left(x,\partial \Omega\right)},  \qquad E \de H_e- \1_\Omega H_+ \1_\Omega - \1_{\Omega^c} H_- \1_{\Omega^c}.
\end{equation}
We call $H_\pm$ the \textit{bulk operators}, $\MG$ the \textit{bulk (spectral) gap}, and $\II(H_\pm)$ the \textit{bulk index}.
\end{definition}

To define the edge index for the interface system, we need to introduce a geometric condition, called transversality, for two sets $\Omega$, $W\subset \R^2$, as in \cite{DZ24}. 
\begin{definition}\label{def:trans sets}
    We say that two sets $\Omega,W\subset \R^2$ are transverse (see \Cref{fig: transversality}) if
\begin{equation}\label{eq:transversal sets}
    \liminf_{\|x\| \rightarrow +\infty} \dfrac{\log \Psi_{\Omega,W}(x)}{\log \|x\|} >0, \qquad \Psi_{\Omega,W}(x) = 1+d(x,\p \Omega) + d(x,\p W).
\end{equation}
\end{definition}

\begin{figure}[t]
     \centering
     \begin{subfigure}[ht]{0.3\textwidth}
         \centering
         \includegraphics[width=0.9\textwidth]{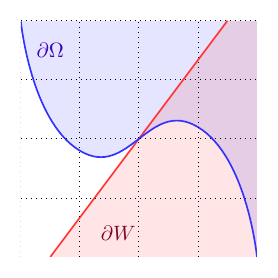}
         \caption{}
         \label{subfig: T_1}
     \end{subfigure}
     \hfill
     \begin{subfigure}[ht]{0.3\textwidth}
         \centering
         \includegraphics[width=0.9\textwidth]{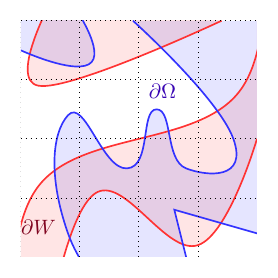}
         \caption{}
         \label{subfig: T_2}
     \end{subfigure}
     \hfill
     \begin{subfigure}[ht]{0.3\textwidth}
         \centering
         \includegraphics[width= 0.9\textwidth]{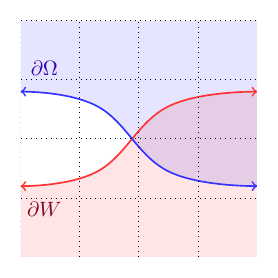}
         \caption{}
     \end{subfigure}
        \caption{In subfigures $(a)$ and $(b)$, the pair $(\Omega,W)$ is transversal, since $\partial \Omega$ and $\partial W$ eventually separate at least polynomially fast. In contrast, in subfigure $(c)$, $\partial \Omega$ and $\partial W$ become asymptotically parallel. Hence $(\Omega,W)$ is not transversal in the sense of \Cref{def:trans sets}. Subfigure $(b)$ highlights the generality allowed by the definition.}
        \label{fig: transversality}
\end{figure}

For two transverse sets $\Omega, W$, one can define an intersection number $\XX_{\Omega, W}$ between $\p \Omega$ and $\p W$, which essentially counts the number of intersections between $\p \Omega$ and $\p W$, see \Cref{fig: 1} for an illustration and \Cref{sec-5} for a short review in this context and \cite[Sec. 7]{DZ24} for all details. We can now define the edge index.

\begin{figure}[b]
     \centering
     \begin{subfigure}[ht]{0.3\textwidth}
         \centering
         \safeincludegraphics[width=0.9\textwidth]{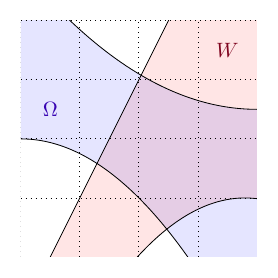}
         \caption{}
         \label{fig: 1(a)}
     \end{subfigure}
     \hfill
     \begin{subfigure}[ht]{0.3\textwidth}
         \centering
         \safeincludegraphics[width=0.9\textwidth]{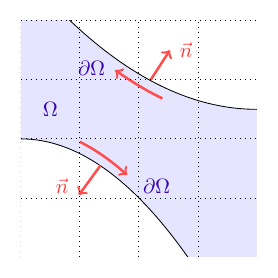}
         \caption{}
         \label{fig: 1(b)}
     \end{subfigure}
     \hfill
     \begin{subfigure}[ht]{0.3\textwidth}
         \centering
         \safeincludegraphics[width=0.9\textwidth]{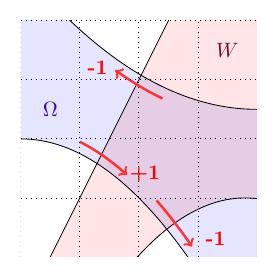}
         \caption{}
         \label{fig: 1(c)}
     \end{subfigure}
        \caption{We define the intersection number $\mathcal X_{\Omega, W}$ between transverse simple sets $\Omega, W$ in two steps. We first orient $\p \Omega$ such that $\Omega$ is to its left according to the outward-pointing normal, see (a) and (b); then we count how many times the oriented $\p \Omega$ enters $W$, see (c). Here $\mathcal X_{\Omega, W} = +1 - 1 - 1 = -1$.}
        \label{fig: 1}
\end{figure}

\begin{figure}[t]
     \centering
     \begin{subfigure}[b]{0.3\textwidth}
         \centering
         \safeincludegraphics[width=0.9\textwidth]{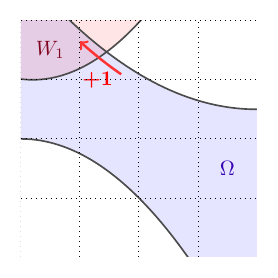}
         \caption{}
         \label{fig: 2(a)}
     \end{subfigure}
     \hfill
     \begin{subfigure}[b]{0.3\textwidth}
         \centering
         \safeincludegraphics[width=0.9\textwidth]{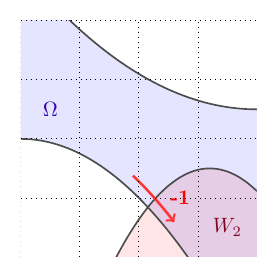}
         \caption{}
         \label{fig: 2(b)}
     \end{subfigure}
     \hfill
     \begin{subfigure}[b]{0.3\textwidth}
         \centering
         \safeincludegraphics[width=0.9\textwidth]{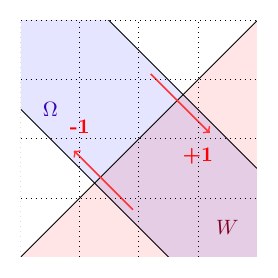}
         \caption{}
         \label{fig: 2(c)}
     \end{subfigure}
        \caption{The intersection numbers in subfigures \ref{fig: 2(a)}, \ref{fig: 2(b)}, and \ref{fig: 2(c)} are $+1$, $-1$, and $0$, respectively. In the integer-valued Hall setting, the corresponding edge conductance changes sign or vanishes, respectively. }
        \label{fig: 2}
\end{figure}

\begin{definition}\label{def-edge_index} Let $H_e$ be an interface Hamiltonian compatible with $H_\pm$ and $\Omega$, with bulk spectral gap $\GG$. The edge index of $H_e$ in $\GG$ is
\begin{equation}\label{eqdef-edge}
    \NN(H_e) \equiv  \dim \ker \br{\bbL_W e^{-2\pi i g(H_e)}}
\end{equation}
where $g \in C^\infty(\R)$ is $1$ on $(-\infty,\inf \GG]$ and $0$ on $[\sup \GG, +\infty)$.
\end{definition}

\subsection{Main result} The main result of this work is a geometric bulk-edge correspondence:

\begin{theorem}\label{thm-BEC}
    Assume that $H_\pm$ are two ESR, $\Theta$-invariant Hamiltonians with a joint spectral gap $\GG \subset \R$, $\Omega$ and $W$ are two transverse sets, and $H_e$ is an interface edge Hamiltonian compatible with $H_\pm$ and $\Omega$. Then
    \begin{align}
        \NN(H_e) = \XX_{\Omega,W} \cdot \big( \II(H_+) - \II(H_-)\big) \mod 2.
    \end{align}
\end{theorem}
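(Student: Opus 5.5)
The plan follows the two-step structure announced in the introduction. First, replace the edge index by geometric bulk $\Z_2$ indices. Second, evaluate those geometric indices through the intersection number, using the framework of \cite{DZ24}.

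\emph{Step 1 (reduction to bulk quantities, \Cref{thm-2}).} Let $\Pi_\pm = \1_{(-\infty,\inf\GG]}(H_\pm)$. The goal is
\begin{equation*}
\NN(H_e) \equiv \dim\ker \bbL_W U^+_\Omega + \dim\ker \bbL_W U^-_{\Omega^c},
\end{equation*}
where $U^\pm_S = e^{-2\pi i \1_S \Pi_\pm \1_S}$.
\begin{itemize}
\item Since $g$ is $1$ below the gap and $0$ above, $e^{-2\pi i g(H_\pm)} = \Id$, so $U_e := e^{-2\pi i g(H_e)}$ is a compact-type perturbation of the identity localized near $\partial\Omega$.
\item By the Helffer--Sjöstrand formula and the ESR bound on $E$, $U_e - \Id$ has kernel decaying like $e^{-\nu d(x,\partial\Omega)}$ up to polynomial factors, and likewise off the diagonal.
\item Compare $U_e$ with $U^+_\Omega \cdot U^-_{\Omega^c}$. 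Both differ from $\Id$ only near $\partial\Omega$, and their difference is localized near $\partial\Omega$ in the sense of kernels.
\item Sandwiching this difference by $\1_W$ and $\1_{W^c}$ introduces decay in $d(x,\partial W)$. Transversality, i.e. $\Psi_{\Omega,W}(x)\gtrsim \|x\|^{\epsilon}$, then makes the commutator-type remainders $[\1_W, \cdot]$ and $\1_W(U_e - U^+_\Omega U^-_{\Omega^c})\1_W$ trace class, or at least compact.
\end{itemize}
I will then use the \S\ref{sec-2} properties of the $\Z_2$ index of $\Theta$-admissible pairs (projection, unitary) to finish Step 1:
\begin{itemize}
\item \emph{Invariance under compact deformation along a $\Theta$-invariant path.} This applies because $\Theta$ commutes with $\1_W$, with $g(H_e)$ and with $\Pi_\pm$, so $\dim\ker$ mod 2 is stable: Kramers degeneracy forces kernels to change by even amounts.
\item \emph{Logarithmic additivity.} This gives $\dim\ker \bbL_W(U_1U_2) \equiv \dim\ker\bbL_W U_1 + \dim\ker\bbL_W U_2$ when the factors are admissible.
\end{itemize}

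\emph{Step 2 (geometric evaluation, \Cref{thm-3}).} For a single gapped $\Theta$-invariant $H$ with projection $\Pi$ and transverse $\Omega,W$, I will show
\begin{equation*}
\dim\ker \bbL_W U_\Omega \equiv \XX_{\Omega,W}\,\II(H).
\end{equation*}
The steps are as follows.
\begin{itemize}
\item Use the additivity in $\Omega$ and in $W$ from \S\ref{sec-3}. Disjoint unions of sets, and the relations $\1_{\Omega_1\cup\Omega_2}$ versus the product of $U$'s up to compact error, both add indices.
\item Use invariance under compact or "far-away" deformations of $\Omega$ and $W$.
\item Following \cite[Sec. 7]{DZ24}, decompose $(\Omega,W)$ into elementary configurations. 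Each is equivalent, after a deformation that keeps transversality and changes the operator only by admissible compact perturbations, to quarter-plane pairs $(\Hh_2,\Hh_1)$ with orientation $\pm1$.
\item Each elementary piece contributes $\pm\II(H)$ by the definition \eqref{eqdef-bulk}, using rotation/reflection symmetry arguments. Modulo 2 the sign is irrelevant, which simplifies matters relative to the Hall case.
\end{itemize}
Since $\XX_{\Omega^c,W} = -\XX_{\Omega,W}$, combining Steps 1 and 2 gives
\begin{equation*}
\NN(H_e)\equiv \XX_{\Omega,W}\big(\II(H_+)-\II(H_-)\big).
\end{equation*}

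\emph{Main obstacle.} I expect the hardest part to be the analytic estimates in Step 1. Specifically, one must show that the mixed errors are compact under only the polynomial separation allowed by transversality, while preserving $\Theta$-admissibility so that the mod 2 kernel dimension is well defined and invariant. My first attempt will be to prove kernel bounds of the form
\begin{equation*}
e^{-\nu d(x,\partial\Omega)}\,\langle d(x,\partial W)\rangle^{-N},
\end{equation*}
and then sum these over $\Z^2$ using $\Psi_{\Omega,W}\gtrsim\|x\|^\epsilon$.
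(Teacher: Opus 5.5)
Your two-step architecture (\Cref{thm-2}, then \Cref{thm-3}) is the paper's, but there is a genuine gap: the step carrying the new $\Z_2$ content is missing. In Step 2 you claim that each simple pair with $\XX_{\Omega,W}=\pm1$ reaches $(\Hh_2,\Hh_1)$ through admissible compact perturbations. That is false. For $\Omega=\{x_2>x_1^3\}$, $W=\Hh_1$, the set $\Omega\,\Delta\,\Hh_2$ is unbounded and $\1_W(U_\Omega-U_{\Hh_2})\1_W$ is not compact. So neither \Cref{prop-2a} nor \Cref{lemma-3h} applies, and a general $H$ has no rotation symmetry to invoke. In the Hall setting one passes to the limit using trace-norm continuity of an integer-valued trace. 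Here $\dim\ker(\1-\AAA)\bmod 2$ is \emph{not} trace-norm continuous in general (consider $t\,e_1\otimes e_1$ near $t=1$).

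The missing idea is the local determinacy of \Cref{prop-2b}. Since $\Theta U$ anticommutes with $\AAA=U^*[U,\1_W]$, the operator $J=\Theta U\BBB$ (with $\BBB=\1_{W^c}-U^*\1_WU$) pairs orthogonal eigenvectors. Hence every eigenvalue of $\AAA$ in $(-1,1)$ has even multiplicity (\Cref{lemma-even-multiplicity}), and $\II(U,\1_W)\equiv\Tr\1_{(a,b)}(\AAA)$ for $0<a<1<b$. A resolvent contour argument then makes this count stable under trace-norm-small changes of $\AAA$. Now take the sets $(\Omega_n,W_n)$ of \cite[Proposition 5]{DZ24}: they are compactly equivalent to $(\Omega,W)$, uniformly transverse, and equal to the half-planes on $\Bb_{8n}(0)$. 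They give $\|\AAA_{\Omega_n,W_n}-\AAA_{\Hh_2,\Hh_1}\|_{\Tr}\to0$, which settles the case $\XX_{\Omega,W}=1$. The case $-1$ then follows from the complement identity \Cref{lemma-3e}, not from a reflection of $H$. So your ``main obstacle'' is misplaced: the Step 1 estimates are imported from \cite{DZ24}, while this non-compact deformation is where the new work lies.

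Step 1 also rests on a false compactness claim. The diagonal block $\1_W(U_e-U^+_\Omega U^-_{\Omega^c})\1_W$ is localized near $\p\Omega\cap W$, not near $\p\Omega\cap\p W$. Sandwiching produces decay in $d(x,\p W)$ only for the off-diagonal blocks, i.e.\ for commutators with $\1_W$. When $\p\Omega$ runs to infinity inside $W$ (exactly the case $\XX_{\Omega,W}\neq0$), this block is generically non-compact; for a straight interface with $H_e$ translation-invariant along it, the block is translation-invariant along the edge.

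The paper instead interpolates the generators, $X(t)=(1-t)g(H_e)+tY$ with $Y=\1_\Omega\Pi_+\1_\Omega+\1_{\Omega^c}\Pi_-\1_{\Omega^c}$. Since $X(t)^2-X(t)$ stays localized near $\p\Omega$, each $[e^{-2\pi iX(t)},\1_W]$ is trace class. Hence $t\mapsto\bbL_W e^{-2\pi iX(t)}$ is a norm-continuous path of $\Theta$-odd Fredholm operators (\Cref{lemma-3c}); interpolating the unitaries directly need not stay Fredholm.

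Finally, logarithmic additivity is not available for general admissible factors: $U_1U_2$ is not even $\Theta$-odd unless $U_1,U_2$ commute. What you need is only the disjoint-support case of \Cref{lemma-3f}, where $(\1-U_1)(\1-U_2)=0$ splits the kernel as a direct sum, and this is exactly the situation for $U^+_\Omega U^-_{\Omega^c}$.
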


\Cref{thm-BEC} connects the bulk index to the edge index through the geometric intersection number $\XX_{\Omega,W}$, which quantifies the relative position of $\Omega$ and $W$. The same identity was previously derived for $\Z$-valued Hall insulators in \cite{DZ24}.

One direct application of \Cref{thm-BEC} is the characterization of the shape of $\Omega$ that admits absolutely continuous edge spectrum filling the bulk spectral gap -- this is an extension of previous \cite{DSZ24} from Hall insulators to $\Z_2$-topological insulators. 
\begin{definition}[see also \cite{DSZ24}]
    We say $\Omega$ and $\Omega^c$ both contain a parabolic region if there exists $\alpha > 0$ and two rigid motions $M_\pm$ of $\R^2$  such that 
\begin{eqnarray}\label{eq-1j}
    \{ M_+x : \ x_2 \geq |x_1|^\alpha \} \subset \Omega, \qquad \{ M_-x : \ x_2 \geq |x_1|^\alpha \} \subset \Omega^c.
\end{eqnarray}
\end{definition}
And we have 
\begin{corollary}
    Assume that:
    \begin{itemize}
        \item $H_\pm$ are two ESR, $\Theta$-invariant Hamiltonians with a joint spectral gap $\GG \subset \R$ and $\II(H_+) \neq \II(H_-) \mod 2$;
        \item both $\Omega$ and $\Omega^c$ contain a parabolic region;
        \item $H_e$ is an interface edge Hamiltonian compatible with $H_\pm$ and $\Omega$.
    \end{itemize}
 Then $\GG$ is contained in the absolutely continuous spectrum of $H_e$. 
\end{corollary}
For Hall insulators, this result follows from a combination of geometric bulk-edge correspondence \cite{DZ24}, absolutely continuous spectrum for a pair of unitary and projection with non-zero index by Bols--Werner \cite{BW22} (first proved by Arsh--Bourget--Joye \cite{ABJ20}), and some spectral theory arguments, see \cite{DSZ24} or \cite{BW22}. 

For $\Z_2$-topological insulator, we just need to update the Chern geometric bulk-edge correspondence with \Cref{thm-BEC} and replace \cite{BW22} with its $\Z_2$-version Bols--Cedzich \cite{BC24} and the corollary would follow immediately as in \cite{DSZ24}.  

\subsection{Strategy of proofs of main results}

The proof of \Cref{thm-BEC} follows the strategy of \cite{DZ24}. We first show that the edge index $\NN(H_e)$ is the difference of two geometric bulk indices:
\begin{theorem}\label{thm-2} Under the assumptions of Theorem \ref{thm-BEC},
\begin{equation}
    \NN(H_e) \equiv \II_{\Omega,W}(H_+) - \II_{\Omega,W}(H_-).
\end{equation}
\end{theorem}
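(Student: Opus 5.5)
Here $\II_{\Omega,W}(H) \equiv \dim\ker(\bbL_W U_\Omega)$ with $U_\Omega = e^{-2\pi i \1_\Omega \Pi \1_\Omega}$ and $\Pi$ the Fermi projection of $H$; this is the geometric bulk index of \S\ref{sec-3}. The plan follows \cite{DZ24}. We compare $e^{-2\pi i g(H_e)}$ with the product of two geometric unitaries built from the bulks. We use two algebraic facts from \S\ref{sec-2}. First, for a $\Theta$-admissible pair $(\PP,\UU)$ the index $\dim\ker(\bbL_\PP \UU) \bmod 2$ is invariant under norm-continuous deformations within admissible pairs. Second, it is invariant under perturbations of $\UU$ by $\Theta$-invariant operators that are compact (more precisely, trace class) once sandwiched by $\PP$. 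Third, it is additive: $\dim\ker \bbL_\PP(\UU_1\UU_2) \equiv \dim\ker \bbL_\PP\UU_1 + \dim\ker\bbL_\PP \UU_2$ whenever $[\PP,\UU_j]$ is compact. The target identity is then
\begin{equation*}
\NN(H_e) \equiv \dim\ker \bbL_W\big(U^+_\Omega\, (U^-_{\Omega})^{*}\big) \equiv \II_{\Omega,W}(H_+) + \II_{\Omega,W}(H_-),
\end{equation*}
using $-1\equiv 1$ and that $\dim\ker \bbL_W (U^-_\Omega)^*$ equals $\dim\ker\bbL_W U^-_\Omega$ (adjoint has the same kernel dimension modulo two; for $\Theta$-admissible pairs the Fredholm index vanishes).

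\textbf{Step 1 (from the edge to a product).} Replace $g(H_e)$ by $g$ applied to the decoupled operator $\1_\Omega H_+\1_\Omega + \1_{\Omega^c}H_-\1_{\Omega^c}$ plus a $\Theta$-invariant gap-preserving correction supported near $\p\Omega$. The difference $E$ decays away from $\p\Omega$, so by Helffer--Sjöstrand, $e^{-2\pi i g(H_e)} - e^{-2\pi i g(H_e')}$ has kernel bounded by $e^{-\nu d(x,\p\Omega)}e^{-\nu|x-y|}$. Sandwiching with $\1_W$ and using transversality ($d(x,\p\Omega)+d(x,\p W)\gtrsim |x|^\epsilon$) makes $[\1_W,\cdot]$ and the difference trace class. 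Then invariance reduces to a convenient $H_e$. Next note that $e^{-2\pi i g(H_e)}$ equals $1$ modulo operators localized near $\p\Omega$. Likewise $U^\pm_\Omega$ differs from $1$ (respectively $U^-_{\Omega}$ from $1$ after the switch $\Omega\to\Omega^c$, since $e^{-2\pi i \1_\Omega\Pi\1_\Omega}$ and $e^{-2\pi i\1_{\Omega^c}\Pi\1_{\Omega^c}}$ have product equal to $1$ modulo edge-localized, $W$-trace-class errors) only near $\p\Omega$. We then show
\begin{equation*}
e^{-2\pi i g(H_e)} - U^+_\Omega\, U^-_{\Omega^c} \ \text{ is trace class after compression by } \1_W,
\end{equation*}
by splitting space into a neighborhood of $\p\Omega$ far from $\p W$ (where local comparison with the bulk functional calculus applies, as in \cite{DZ24}) and a compact region.

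\textbf{Step 2 (conclude).} Apply additivity to split the product. Then use $U^-_{\Omega^c} \sim (U^-_\Omega)^{*}$ modulo $W$-trace-class errors, which comes from the identity $U^-_\Omega U^-_{\Omega^c}\approx e^{-2\pi i\Pi_-}=1$, to obtain Theorem \ref{thm-2}.

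\textbf{Main obstacle.} The main obstacle is Step 1's trace-class estimate. Functional calculus of $g(H_e)$ compared with $\1_\Omega \Pi_+\1_\Omega$ is not local in the naive sense; one must exploit that $e^{-2\pi i x}$ vanishes to all orders in $x(1-x)$ away from the gap. I would first try writing $e^{-2\pi i g}-1 = \phi(H)$ with $\phi$ supported in $\GG$ and use Combes--Thomas bounds for $(H-z)^{-1}$ with $z$ near $\GG$, which gives exponential localization near $\p\Omega$ for all three operators. Transversality then converts this into summability of the kernels over $W$.
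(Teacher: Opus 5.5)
\textbf{Gap in Step 1.} Your skeleton matches the paper's: edge unitary $\to U^+_\Omega U^-_{\Omega^c}\to$ additivity $\to$ complement. The mechanism you propose for the first arrow, which is the step you flag as the main obstacle, is false. You claim that $\1_W\big(e^{-2\pi i g(H_e)}-U^+_\Omega U^-_{\Omega^c}\big)\1_W$, and likewise $\1_W\big(e^{-2\pi i g(H_e)}-e^{-2\pi i g(H_e')}\big)\1_W$, is trace class.

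Both $e^{-2\pi i g(H_e)}-\1$ and $U^+_\Omega U^-_{\Omega^c}-\1$ are localized near $\p\Omega$ only. Transversality makes summable only kernels that decay away from \emph{both} $\p\Omega$ and $\p W$, such as $[\1_W,\cdot\,]$. It gives nothing for $\1_W K\1_W$ with $K$ localized near $\p\Omega$, because $\p\Omega\cap W$ is unbounded as soon as $\XX_{\Omega,W}\neq 0$. Concretely, take $\Omega=\Hh_2$, $W=\Hh_1$, and $H_\pm,H_e$ invariant under horizontal shifts. Then the difference commutes with those shifts, so its compression by $\1_W$ is compact only if it vanishes. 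It does not vanish in general: $e^{-2\pi i g(H_e)}$ depends on $E$ and on the choice of $g$ through the spectrum of $H_e$ in $\GG$, while $U^+_\Omega U^-_{\Omega^c}$ does not. No Combes--Thomas estimate can repair this.

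The missing idea is that you need a continuous path of admissible pairs, not a compact difference. Set $X=g(H_e)$, $Y=\1_\Omega\Pi_+\1_\Omega+\1_{\Omega^c}\Pi_-\1_{\Omega^c}$ and $X(t)=(1-t)X+tY$. Both $X-Y$ (Lemma~\ref{lem-2b}) and $X^2-X$ decay away from $\p\Omega$, hence so does $X(t)^2-X(t)$. Writing $e^{-2\pi i x}-1=h(x)(x^2-x)$ with $h$ entire then shows that $[\1_W,e^{-2\pi i X(t)}]$ is trace class for every $t$. So $t\mapsto \bbL_W e^{-2\pi i X(t)}$ is a norm-continuous path of $\Theta$-odd Fredholm operators, and Proposition~\ref{prop-2a} gives $\NN(H_e)\equiv\dim\ker \bbL_W e^{-2\pi i Y}$; this is Lemma~\ref{lemma-3c}. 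Moreover, $g(H_\pm)=\Pi_\pm$ exactly and the two blocks of $Y$ have zero product, so $e^{-2\pi i Y}=U^+_\Omega U^-_{\Omega^c}$ exactly, and no local comparison is needed.

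\textbf{Additivity and Step 2.} Your third ``fact'' (additivity) is not available in the generality you state. The product $\UU_1\UU_2$ is $\Theta$-odd only when $\UU_1$ and $\UU_2$ commute; otherwise $\dim\ker\bbL_\PP(\UU_1\UU_2)\bmod 2$ is not a homotopy invariant. In particular, your displayed target $\dim\ker\bbL_W\big(U^+_\Omega(U^-_\Omega)^*\big)$ is not a $\Z_2$ index, since $\1_\Omega\Pi_+\1_\Omega$ and $\1_\Omega\Pi_-\1_\Omega$ need not commute. What does hold (Lemma~\ref{lemma-3f}) is additivity for $e^{-2\pi i \1_{\Omega_1}\Pi_1\1_{\Omega_1}}e^{-2\pi i\1_{\Omega_2}\Pi_2\1_{\Omega_2}}$ with $\Omega_1\cap\Omega_2=\emptyset$. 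It is proved via $(\1-U_1)(\1-U_2)=0$ and an explicit splitting $\EE=\EE_1\oplus\EE_2$ of kernels. This covers $U^+_\Omega U^-_{\Omega^c}$, which is all you need.

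Step 2's claim that $U^-_{\Omega^c}\sim (U^-_\Omega)^*$ modulo $W$-trace-class errors fails for the same reason as Step 1. Indeed, $U^-_\Omega U^-_{\Omega^c}-\1=e^{-2\pi i(\1_\Omega\Pi_-\1_\Omega+\1_{\Omega^c}\Pi_-\1_{\Omega^c})}-\1$ is in general of order one all along $\p\Omega$. The detour is unnecessary anyway. The complement identity (Lemma~\ref{lemma-3e}) gives $\II(U^-_{\Omega^c},\1_W)=\II_{\Omega^c,W}(H_-)\equiv\II_{\Omega,W}(H_-)$ directly. That identity rests on $V_{\Omega^c}=\Theta V_\Omega\Theta^*$ together with the homotopy between $U_\Omega$ and $V_\Omega$, again a path argument rather than a compact perturbation. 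Note also that admissible compact perturbations of $\UU$ must be $\Theta$-odd, not $\Theta$-invariant.
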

Here, the geometric bulk index $\II_{\Omega, W}(H)$ is defined as the curved analogue of $\II(H_\pm)$ in \eqref{eqdef-bulk}, obtained by replacing the half-planes $\Hh_1$ and $\Hh_2$ with transverse sets $\Omega$ and $W$:
\begin{equation}\label{eqdef-geobulk}
    \II_{\Omega,W}(H) \equiv \dim \ker \big(\1_W \cdot e^{-2i\pi \1_\Omega \Pi \1_\Omega} \cdot \1_W+\1_{W^c} \big) \equiv \dim\ker(\bbL_W U_\Omega).
\end{equation}
Transversality guarantees that \eqref{eqdef-geobulk} is well-defined; see \Cref{sec-3}.

In \cite{DZ24}, the analogous statement for Hall insulators was proved using a spectral-theoretic approach originating in \cite{EGS05}. Here we use a direct index-theoretic argument adapted to $\Z_2$-insulators. In outline, we show
\begin{align}
    \NN(H_e) &= \II(e^{-2\pi i g(H_e)}, \1_W) && \text{\textit{by definition}}\\
    &= \II(e^{-2\pi i (\1_\Omega g(H_+) \1_\Omega + \1_{\Omega^c} g(H_-) \1_{\Omega^c})}, \1_W) && \text{\textit{by stability}}\\
    &= \II(e^{-2\pi i (\1_\Omega g(H_+) \1_\Omega)}, \1_W) + \II(e^{-2\pi i (\1_{\Omega^c} g(H_-) \1_{\Omega^c})}, \1_W) && \text{\textit{by additivity}}\\
    &\equiv \II_{\Omega, W}(H_+) - \II_{\Omega, W}(H_-) && \text{\textit{by complement}}.
\end{align}
The stability step replaces $g(H_e)$ by
\begin{align}
    \1_\Omega g(H_+) \1_\Omega + \1_{\Omega^c} g(H_-) \1_{\Omega^c} = \1_\Omega \Pi_+ \1_\Omega + \1_{\Omega^c} \Pi_- \1_{\Omega^c}.
\end{align}
This replacement does not change the $\Z_2$ index by the stability results proved in \S\ref{sec-3}. That section also proves the additivity and complement identities used in the last two lines.

We then show that one can extract the intersection number $\XX_{\Omega,W}$ out of the geometric bulk indices $\II_{\Omega,W}(H_\pm)$ which is similar to \cite{DZ24} but we believe is the first result of this type for $\Z_2$-index:

\begin{theorem}\label{thm-3} Under the assumptions of Theorem \ref{thm-BEC},
\begin{equation}
    \II_{\Omega,W}(H_\pm) \equiv \XX_{\Omega,W} \cdot \II(H_\pm).
\end{equation}
\end{theorem}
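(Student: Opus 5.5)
We follow the geometric strategy of \cite{DZ24}, now run with $\Z_2$-valued indices. Everything rests on the properties of the curved indices $\II_{\Omega,W}(H) \equiv \dim\ker(\bbL_W U_\Omega)$ established in \S\ref{sec-3}, which we use as black boxes:
\begin{itemize}
\item \emph{additivity} in $\Omega$ and in $W$ for disjoint unions, up to sets with compact symmetric difference;
\item \emph{invariance} under compact deformations of $\Omega$ and $W$;
\item the \emph{complement identity} $\II_{\Omega^c,W}\equiv\II_{\Omega,W}$ and $\II_{\Omega,W^c}\equiv\II_{\Omega,W}$ modulo two, which holds because signs are irrelevant in $\Z_2$;
\item \emph{rotation and translation covariance}: if $M$ is a rigid motion, or a lattice symmetry close to one, then $\II_{M\Omega,MW}(H)\equiv \II_{\Omega,W}(H)$. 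This holds because $\II$ computed for two transverse half-planes does not depend on the chosen pair of directions.
\end{itemize}
The normalization is $\II_{\Hh_2,\Hh_1}(H)\equiv\II(H)$. This follows from the definition \eqref{eqdef-bulk}, together with the symmetry $\II_{\Omega,W}\equiv\II_{W,\Omega}$, which is obtained from the admissible-pair framework of \S\ref{sec-2}.

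\textbf{Step 1: local reduction.} Using transversality, we decompose the plane into a large compact ball $B_R$ and finitely many conical-type sectors near infinity. In each sector, $\p\Omega$ and $\p W$ are polynomially separated. By \cite[Sec.~7]{DZ24}, there is a decomposition $W = W_1\sqcup\dots\sqcup W_k$, up to compact modifications, together with sets $\Omega_j$, such that:
\begin{itemize}
\item $\Omega_j$ agrees with $\Omega$ near $W_j$, and each pair $(\Omega_j,W_j)$ is transverse;
\item the boundary of each $W_j$ crosses $\p\Omega$ only once at infinity, or not at all;
\item $\XX_{\Omega,W}=\sum_j \XX_{\Omega_j,W_j}$.
\end{itemize}
Additivity in $W$ then gives $\II_{\Omega,W}\equiv\sum_j\II_{\Omega_j,W_j}$. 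We have to check that locality of $U_\Omega$ lets us replace $\Omega$ by $\Omega_j$ inside $\bbL_{W_j}$. Concretely, the difference $\1_{W_j}(U_\Omega-U_{\Omega_j})\1_{W_j}$ should be trace class, or at least compact and time-reversal compatible, so that the $\Z_2$ index is unchanged. This uses the ESR decay of $\Pi$ and the polynomial separation from transversality.

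\textbf{Step 2: model computation.} For each elementary pair we compute the index directly.
\begin{itemize}
\item If $\p\Omega_j$ and $\p W_j$ do not meet at infinity, then $\1_{W_j}\1_{\Omega_j}$ or $\1_{W_j}\1_{\Omega_j^c}$ is, up to exponentially small errors, compact. In that case $\bbL_{W_j}U_{\Omega_j}$ is a compact perturbation of the identity, possibly after applying the complement identity. Its index is that of a compact deformation to the identity, hence even, so the contribution is $0\equiv\XX_{\Omega_j,W_j}$.
\item If $\p\Omega_j$ and $\p W_j$ cross once, we deform $(\Omega_j,W_j)$, keeping transversality throughout, to a rigid motion of the pair of half-planes $(\Hh_2,\Hh_1)$. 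The deformation rotates each asymptotic ray of the boundaries while keeping the rays separated. Homotopy invariance of the $\Z_2$ index along norm-continuous families of admissible pairs from \S\ref{sec-2}, combined with covariance, then yields $\II(H)\equiv\XX_{\Omega_j,W_j}\,\II(H)$, since $\XX=\pm1\equiv1$.
\end{itemize}
Summing over $j$ gives the theorem.

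\textbf{Main obstacle.} The hardest step is the deformation in the single-crossing case. As the sets move, the operators $\bbL_{W_t}U_{\Omega_t}$ do not vary norm-continuously. We would therefore instead show that each small step changes the sets only by regions far from the other boundary, and handle each such step by the compact-deformation invariance of \S\ref{sec-3}. The needed control is a quantitative estimate: $\|\1_{A}\1_{\Omega}\Pi\1_{\Omega^c}\1_{B}\|$ must be small, in trace norm, when $A$ and $B$ are polynomially far from $\p\Omega$. We would first try to prove this with Combes--Thomas estimates for $\Pi$, mirroring \cite[Sec.~7]{DZ24}.
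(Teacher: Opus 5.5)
There is a genuine gap, located in the step you yourself flag as the main obstacle: the single-crossing case.

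\textbf{Covariance is unavailable.} Your plan moves $(\Omega_j,W_j)$ to a rigid motion of $(\Hh_2,\Hh_1)$ and then invokes covariance. That property is not established in \S\ref{sec-3}, and it is not formal. $H$ has no translation or rotation symmetry, so conjugating by $M$ replaces $H$ by a different Hamiltonian. Your justification, that the half-plane index does not depend on the chosen directions, is itself a special case of \Cref{thm-3}, so the argument is circular.

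\textbf{The deformation step fails as stated.} \Cref{lemma-3h} only covers modifications by a \emph{compact} set. Rotating an asymptotic ray changes $\Omega$ or $W$ by an unbounded sector. Even if that sector is far from the other boundary, $\bbL_W U_{\Omega'}-\bbL_W U_\Omega$ is in general not compact, because $\1-U_\Omega$ is nontrivial all along $\p\Omega\cap W$. So you have neither norm continuity nor a compact perturbation. Your fallback, trace-norm smallness of pieces like $\1_A\1_\Omega\Pi\1_{\Omega^c}\1_B$, does not suffice on its own either. The parity of $\dim\ker(\1-\AAA)$ is not stable under small trace-class perturbations in general (the $B_t=t\,e_1\otimes e_1$ example), and you never supply the mechanism that rules this out. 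Finally, general transverse simple sets need not have asymptotic rays at all, since boundaries may spiral or oscillate. So ``rotating each ray'' is not well-posed.

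\textbf{What the paper does instead.} It never deforms the geometry at infinity. From \cite[Proposition 5]{DZ24} it takes sets $(\Omega_n,W_n)$ with three properties:
\begin{itemize}
\item $\Omega\Delta\Omega_n$ and $W\Delta W_n$ are compact, so $\II_{\Omega,W}\equiv\II_{\Omega_n,W_n}$ by \Cref{lemma-3h};
\item the pairs are uniformly transverse;
\item they coincide with $(\Hh_2,\Hh_1)$ on $\Bb_{8n}(0)$.
\end{itemize}
Since $\AAA_{\Omega,W}=U_\Omega^*[U_\Omega,\1_W]$ decays polynomially away from both boundaries, uniform transversality controls the kernel outside $\Bb_n(0)$. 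Agreement with the half-planes controls it inside up to exponentially small errors. Together these give $\|\AAA_{\Omega_n,W_n}-\AAA_{\Hh_2,\Hh_1}\|_{\Tr}\to0$.

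The ingredient you are missing is local determinacy, \Cref{prop-2b}. Eigenvalues of $\AAA$ other than $\pm1$ have even multiplicity (\Cref{lemma-even-multiplicity}), so the $\Z_2$ index is locally constant in trace norm. Hence $\II_{\Omega_n,W_n}\equiv\II_{\Hh_2,\Hh_1}=\II(H)$ for large $n$. This replaces both your deformation and your covariance assumption.

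\textbf{Two smaller points.}
\begin{itemize}
\item The normalization $\II_{\Hh_2,\Hh_1}(H)=\II(H)$ is literally the definition. The symmetry $\II_{\Omega,W}\equiv\II_{W,\Omega}$ you cite is neither needed nor established.
\item Your Step~1 replacement of $\Omega$ by $\Omega_j$ inside $\bbL_{W_j}$ is an extra, unproved locality claim. The paper's reduction to simple sets uses only exact additivity (\Cref{lemma-3f,lemma-3g}), the complement identity (\Cref{lemma-3e}) and compact invariance (\Cref{lemma-3h}), following \cite[Sec.~7.5]{DZ24}.
\end{itemize}
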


This result relies on the algebraic properties of $\II_{\Omega,W}$ and on the geometric framework developed in \cite{DZ24}. The proof has two parts:
\begin{enumerate}
    \item By additivity of $\II_{\Omega,W}$ in both $\Omega$ and $W$, we can reduce the proof of Theorem \ref{thm-3} to the case where both $\p \Omega$ and $\p W$ are made of a single curve (so $\Omega$ and $W$ are simple sets), see \Cref{subsec-5.2}.
    \item In the simple-set case, we use local or global uniform deformations to deform $\Omega$ and $W$, while preserving the index, to shapes whose index is easy to compute, e.g. non-intersection or half-plane-intersection. In particular, the hardest case is when $\p \Omega$ and $\p W$ intersect once. In this case, uniformity in global deformation allows us to pass the computation of the index to the limit as in \cite{DZ24}, which leads to the conclusion of \Cref{thm-3}, see \Cref{subsec-5.1}.
\end{enumerate}

\subsection{Structure of the paper}
The rest of the paper is organized as follows.
\begin{itemize}
    \item  In \Cref{sec-2}, we discuss the general theory of $\Z_2$-index for a pair of $\Theta$-admissible unitary and projection.
    \item In \Cref{sec-3}, we derive key properties of the $\Z_2$-index for time-reversal-invariant Hamiltonians.
    \item In \Cref{sec-4}, we prove \Cref{thm-2}.
    \item In \Cref{sec-5}, we prove \Cref{thm-3}.
\end{itemize}

\subsection{Notations}

\begin{itemize}
    \item We say $H$ is antilinear if $H(\lambda x) = \overline{\lambda} H(x)$.
    \item The adjoint $H^*$ of an antilinear operator $H$ is given by $\lr{x, H^*y} = \overline{\lr{Hx, y}}$. It is also antilinear.
    \item We say an operator $H$ is antiunitary if $\lr{Hx, Hy} = \lr{y, x}$.
    \item We say $A$ is $\Theta$-invariant if $\Theta A \Theta^{-1} = A$.
    \item We say $B$ is $\Theta$-odd if $\Theta B \Theta^{-1} = B^*$.

\end{itemize}

\subsection*{Acknowledgements} We gratefully acknowledge support from the National Science Foundation DMS-2054589 and CAREER DMS-2439949 (AD) and the Pacific Institute for the Mathematical Sciences (XZ). JS was supported in part by NSF grant DMS-2510207. The contents of this work are solely the responsibility of the authors and do not necessarily represent the official views of PIMS.

\newpage

\section[General theory of Z2-index]{\texorpdfstring{General theory of $\Z_2$-index}{General theory of Z2-index}}\label{sec-2}
This section reviews the $\Z_2$ index for an admissible pair consisting of a unitary operator $\UU$ and a projection $\PP$. We introduce two equivalent definitions and establish compactness and norm stability.

Let $\Theta$ be a time-reversal map.
\begin{definition}
We say a pair of a unitary $\UU$ and a projection $\PP$, denoted by $(\UU,\PP)$, is \textit{$\Theta$-admissible} if $[\UU,\PP]$ is trace-class\footnote{Compact is enough for $\Z_2$-index to be well-defined, but trace-class is more convenient for the quantitative estimate later. }, $\UU$ is $\Theta$-odd, $\PP$ is $\Theta$-invariant, i.e.
    \begin{align}\label{eq-2e}
        \Theta \PP \Theta^{-1} = \PP, \qquad \Theta \UU \Theta^{-1} = \UU^*.
    \end{align}
\end{definition}
One can define the $\Z_2$-index for such a $\Theta$-admissible pair in two equivalent ways:
\begin{lemma}\label{lemma-2a}
Assume $(\UU, \PP )$ is $\Theta$-admissible. Then
    \begin{align}
        \II_1(\UU, \PP ):=& \dim\ker(\PP ^\perp + \PP \UU \PP ) \mod 2\label{eqdef-PUP}\\
        =& \dim\ker(\1 - \PP(1 -\UU) \PP)  \mod 2\\
        \II_2(\UU , \PP ):=& \dim\ker(\PP ^\perp + \UU ^*\PP \UU ) \mod 2\\
        =& \dim\ker(\1 - \UU^*[\UU, \PP]) \mod 2 \label{eqdef-UPU}
    \end{align}
are both well-defined and equal $\II_1(\UU , \PP ) = \II_2(\UU , \PP )$. We denote them by $\II(\UU, \PP)$.
\end{lemma}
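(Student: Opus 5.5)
The plan is to show that all four operators in the statement have finite-dimensional kernels and that these kernels all coincide with one explicit subspace. Parity is then automatically the same for both definitions. I would first check the two algebraic identities:
\[
\1 - \PP(1-\UU)\PP = \1 - \PP + \PP\UU\PP = \PP^\perp + \PP\UU\PP .
\]
For the second, note that $\UU^*\PP\UU - \PP = \UU^*(\PP\UU - \UU\PP) = -\UU^*[\UU,\PP]$, so
\[
\PP^\perp + \UU^*\PP\UU = \1 + (\UU^*\PP\UU - \PP) = \1 - \UU^*[\UU,\PP].
\]
The same operator is therefore being written in two ways in each definition, and only $\II_1$ and $\II_2$ need to be compared.

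Next I identify the kernels. For $\PP^\perp + \PP\UU\PP$, write $v = \PP v + \PP^\perp v$. The image is $\PP^\perp v + \PP\UU\PP v$, whose two pieces lie in the orthogonal ranges of $\PP^\perp$ and $\PP$. So $v$ is in the kernel iff $\PP^\perp v = 0$ and $\PP\UU v = 0$, that is,
\[
\ker(\PP^\perp + \PP\UU\PP) = \{ v \in \Range \PP : \UU v \in \Range \PP^\perp\}.
\]
For $\PP^\perp + \UU^*\PP\UU$, both summands are nonnegative: $\PP^\perp \ge 0$, and $\UU^*\PP\UU = (\PP\UU)^*(\PP\UU) \ge 0$. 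Hence $\langle v, (\PP^\perp + \UU^*\PP\UU)v\rangle = \|\PP^\perp v\|^2 + \|\PP\UU v\|^2$. The kernel therefore consists exactly of the $v$ with $\PP^\perp v = 0$ and $\PP\UU v = 0$, which is the same subspace. So the two kernels coincide as sets and $\II_1(\UU,\PP) = \II_2(\UU,\PP)$, even before reducing mod 2.

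It remains to show this common kernel is finite-dimensional. The second form makes this immediate: $\UU^*[\UU,\PP]$ is trace class, hence compact, since $[\UU,\PP]$ is. So $\1 - \UU^*[\UU,\PP]$ is a compact perturbation of the identity, hence Fredholm of index zero, with finite-dimensional kernel. By the previous paragraph, every kernel in the statement is finite-dimensional, so each $\dim\ker(\cdot) \bmod 2$ is well-defined.

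I expect essentially no real obstacle here; the one step needing care is the positivity argument, which avoids solving the kernel equation directly. The $\Theta$-admissibility conditions \eqref{eq-2e} are not needed for well-definedness or for the equality $\II_1 = \II_2$. I would remark that they enter only later, for stability of the mod-2 quantity under compact and norm-small perturbations, where Kramers degeneracy forces eigenvalues of the relevant Fredholm operators to move in pairs.
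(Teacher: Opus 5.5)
Your proposal is correct and follows essentially the same route as the paper: the second equalities are treated as algebraic identities, finite-dimensionality comes from compactness of $\UU^*[\UU,\PP]$, and the two kernels are identified through the nonnegativity of $\PP^\perp$ and $\UU^*\PP\UU$ together with the orthogonality of $\Range\PP$ and $\Range\PP^\perp$. Your observation that $\Theta$-admissibility plays no role in this lemma is also consistent with the paper's proof, which likewise uses only compactness of the commutator.
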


\begin{proof}
    \textbf{1.} First note that the second equalities in both definitions are immediate. In particular, $\II_2(\UU, \PP)$ is just the parity of the multiplicity of the eigenvalue $1$ of the operator $\UU^*[\UU, \PP]$. Since $\UU^*[\UU , \PP ]$ is compact, the multiplicity of $1$ is indeed finite; hence \eqref{eqdef-UPU} is well-defined.

    \textbf{2.} Then it remains to show $\ker(\PP ^\perp + \PP \UU \PP ) = \ker(\PP ^\perp + \UU ^* \PP \UU )$. Indeed,
    \[
        (\PP ^\perp +  \UU ^* \PP  \UU )\psi = 0  ~ \Leftrightarrow ~ \begin{cases}
        \PP ^\perp \psi = 0\\
        \UU ^* \PP  \UU  \psi = 0
        \end{cases} \Leftrightarrow ~\begin{cases}
            \PP ^\perp \psi = 0 \\
          \PP  \UU  \PP  \psi = 0
        \end{cases}\Leftrightarrow ~(\PP ^\perp + \PP  \UU  \PP ) \psi =0,
\]
where the first implication follows from the non-negativity of $\PP$ and $\UU^*\PP\UU$, while the last reverse implication follows from $\Range(\PP ) \perp \Range (\PP ^\perp)$ because $\PP$ is self-adjoint.

This completes the proof.
\end{proof}
\begin{remark}\label{rmk-2a}
Formula \eqref{eqdef-PUP} is the Fredholm-theoretic form of the index: since $[\UU,\PP]$ is compact, $\PP^\perp+\PP\UU\PP$ is Fredholm, with approximate inverse $\PP^\perp+\PP\UU^*\PP$. Its ordinary Fredholm index vanishes because of the odd time-reversal symmetry, but the parity of its kernel is the non-trivial $\Z_2$ index introduced by Atiyah and Singer \cite{Atiyah1969}; see also \cite{SB_2015,FSSWY20}. Formula \eqref{eqdef-UPU} rewrites the same index as the parity of the multiplicity of the eigenvalue $1$ of the compact operator $\AAA=\UU^*[\UU,\PP]$. This analytic form is useful for the local-determinacy estimate below.
\end{remark}

Since \eqref{eqdef-PUP} contributes more towards the algebraic properties of $\Z_2$-index using algebraic methods ($K$-theory), while \eqref{eqdef-UPU} contributes more towards the analysis properties of the $\Z_2$-index using analytic methods (functional analysis), we call \eqref{eqdef-PUP} and \eqref{eqdef-UPU} \textit{algebraic and analytic definitions of $\Z_2$-index}, respectively.

Finally, we review some basic properties of time-reversal map $\Theta$: Recall that antiunitary means $\lr{\Theta x, \Theta y} = \lr{y, x}$. Note also that, unlike linear operators,  the adjoint of an antilinear operator is defined by $\lr{x, \Theta^* y} = \overline{\lr{\Theta x, y}}$. As a result,
\[
\langle x, \Theta^*y \rangle = \langle y, \Theta x\rangle = \langle \Theta^2x, \Theta y\rangle = \langle x, -\Theta y \rangle
\]
where we used $\Theta^2 = -1$. This means $\Theta^* = -\Theta$. Hence
\begin{align}\label{eq-2f}
    \Theta\Theta^* = \Theta^*\Theta = -\Theta^2 = \1, \text{~~and~~}\Theta^{-1} = \Theta^* = -\Theta.
\end{align}
Moreover, if $S$ is a linear operator, then $S\Theta$ is antilinear, with the adjoint $(S\Theta)^* = \Theta^* S^*$.

\subsection{Compact stability}
\begin{proposition}[Homotopy and compact stability]\label{prop-2a}
Let $T_t$, $0\leq t\leq 1$, be a norm-continuous path of $\Theta$-odd Fredholm operators. Then $\dim\ker T_t\mod 2$ is independent of $t$. In particular, the $\Z_2$ index is invariant under $\Theta$-odd norm-continuous Fredholm homotopies and under compact perturbations through $\Theta$-odd Fredholm operators.
\end{proposition}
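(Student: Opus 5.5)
The plan is to show that the parity of $\dim\ker T$ is locally constant in norm along the path. Connectedness of $[0,1]$ then makes it constant. The key structural fact is that, for a $\Theta$-odd operator $T$ (i.e. $\Theta T\Theta^{-1}=T^*$), the map $\Theta$ is an antiunitary bijection $\ker T\to\ker T^*$. Indeed, if $T\psi=0$ then $T^*\Theta\psi=\Theta T\Theta^{-1}\Theta\psi=\Theta T\psi=0$. Since $T$ is Fredholm, the self-adjoint operator $T^*T$ has $0$ as an isolated point of its spectrum, with finite-dimensional eigenspace $\ker T$.

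\textbf{Step 1 (reduction to a finite-dimensional block).} Fix $t_0$ and pick $\varepsilon>0$ with $\sigma(T_{t_0}^*T_{t_0})\cap[0,2\varepsilon]=\{0\}$. For $t$ near $t_0$, norm continuity gives $\sigma(T_t^*T_t)\cap\{\varepsilon\}=\emptyset$. We then set $E_t=\1_{[0,\varepsilon)}(T_t^*T_t)$ and $F_t=\1_{[0,\varepsilon)}(T_tT_t^*)$. These are finite-rank projections, continuous in $t$, with constant rank equal to $\dim\ker T_{t_0}$. The operator $T_t$ maps $\Range E_t$ isomorphically-up-to-kernel into $\Range F_t$, and its restriction $\tau_t:\Range E_t\to\Range F_t$ satisfies $\ker\tau_t=\ker T_t$, because $\ker T_t\subset \Range E_t$ and $T_t$ is injective on $\Range E_t^\perp$. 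So it suffices to show that $\dim\ker\tau_t$ mod $2$ is constant.

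\textbf{Step 2 (compatibility with $\Theta$).} From $\Theta T_t\Theta^{-1}=T_t^*$ we get $\Theta T_t^*T_t\Theta^{-1}=T_tT_t^*$. Therefore $\Theta E_t\Theta^{-1}=F_t$, and the antiunitary $\Theta$ identifies $\Range E_t$ with $\Range F_t$. We define $M_t=\Theta^{-1}\tau_t$, an antilinear map on the finite-dimensional space $V_t=\Range E_t$. The relation $\Theta T\Theta^{-1}=T^*$ means that the bilinear form $\omega_t(x,y)=\langle \Theta x, T_t y\rangle$ on $V_t$ is skew-symmetric. Using $\Theta^2=-1$ and antiunitarity, one checks
\[
\omega_t(y,x)=\langle\Theta y,T_tx\rangle=\langle \Theta T_t x, \Theta^2 y\rangle=-\langle T_t^*\Theta x,y\rangle=-\omega_t(x,y).
\]
Moreover $\ker T_t|_{V_t}$ is exactly the radical of $\omega_t$.

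\textbf{Step 3 (parity).} A complex skew-symmetric bilinear form on a space of fixed dimension $n$ has even rank. Hence $\dim\ker\tau_t=n-\operatorname{rank}\omega_t\equiv n \pmod 2$ is independent of $t$ near $t_0$. To make this precise we transport everything to a fixed space using a continuous family of unitaries $V_{t_0}\to V_t$, for instance from the Kato transformation function. This gives a continuous family of skew matrices, and the rank of each is even. Because the argument only uses the rank at each fixed $t$, continuity is needed only for the constancy of $n$.

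\textbf{Step 4 (consequences).} The statement about compact perturbations follows by applying the same reasoning to the path $T+sK$. This requires the path to stay Fredholm and $\Theta$-odd, which is exactly the hypothesis. Applying the result to $\PP^\perp+\PP\UU\PP$ gives invariance of $\II(\UU,\PP)$.

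The main obstacle I expect is Step 2: verifying the skew-symmetry carefully with antilinear adjoints, and handling the fact that $\ker T^*$ and $\ker T$ differ. This is why I use the bilinear pairing between $\Theta V_t$ and $V_t$ rather than trying to build a single operator on one space.
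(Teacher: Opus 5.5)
Your proof is correct. The paper does not argue this proposition; it only cites \cite{Atiyah1969,SB_2015,FSSWY20}. Your proposal therefore supplies a self-contained proof, namely the classical finite-dimensional reduction. On $V_t=\Range\1_{[0,\varepsilon)}(T_t^*T_t)$, whose dimension is locally constant by continuity of the Riesz projection, the form $\omega_t(x,y)=\langle\Theta x,T_ty\rangle$ is skew, and your computation of this checks out. Its radical is exactly $\ker T_t$, because $\Theta V_t=\Range F_t$ contains $T_tV_t$. Even rank of skew forms then gives $\dim\ker T_t\equiv\dim V_t=\dim\ker T_{t_0}$.

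It is worth comparing this with the mechanism the paper does use in \S2 for $\AAA=\UU^*[\UU,\PP]$ (Lemma~\ref{lemma-even-multiplicity} and Corollary~\ref{cor-spectral-count}). There one counts the dimension of a spectral window and discards the eigenvalues that are Kramers-paired by the antilinear map $J=\Theta\UU\BBB$. The analogue in your setting is that $J=\Theta^{-1}T_t$ commutes with $T_t^*T_t$ and satisfies $J^2=-T_t^*T_t$. Hence every eigenvalue of $T_t^*T_t$ in $(0,\varepsilon)$ has even multiplicity, and $\dim\ker T_t\equiv\operatorname{rank}E_t$. Your skew-form argument reaches the same conclusion without eigenvalue-by-eigenvalue bookkeeping. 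It also covers arbitrary $\Theta$-odd Fredholm paths, which is what Lemmas~\ref{lemma-3c}, \ref{lemma-3g} and \ref{lemma-3h} actually use. The paper's spectral-window version is instead set up to give the quantitative trace-norm criterion of Proposition~\ref{prop-2b}, which \S5 needs.

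Minor points:
\begin{itemize}
\item In Step~1, write $\subseteq\{0\}$ rather than $=\{0\}$, to allow invertible $T_{t_0}$.
\item Depending on which slot of the inner product is linear, $\omega_t$ is bilinear or conjugate-bilinear; this does not affect its rank.
\item The Kato transport in Step~3 is unnecessary, as you already observe.
\item In Step~4 the segment $T+sK$ is automatically admissible. $\Theta$-oddness of $T$ and $T+K$ forces $\Theta K\Theta^{-1}=K^*$, so $T+sK$ is $\Theta$-odd for real $s$, and compact perturbations preserve the Fredholm property.
\end{itemize}
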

\begin{proof}
This is the standard homotopy invariance of the mod-two Fredholm index for odd-symmetric Fredholm operators; see \cite{Atiyah1969,SB_2015,FSSWY20}. 
\end{proof}
\subsection{Norm stability} We first introduce several notations from \cite{ASS94}. Let
\begin{equation}
\QQ := \UU ^* \PP  \UU , \qquad
   \AAA  := \PP  - \QQ = \PP - \UU ^*\PP  \UU = \UU^*[\UU, \PP] , \qquad \BBB := \PP ^\perp - \QQ,
\end{equation}
Using these notations, \eqref{eqdef-UPU} can be written as
\begin{align}
    \II(\UU , \PP ) \equiv \dim \ker (\Id -\UU^*[\UU, \PP]) \equiv \dim \ker (\Id -\AAA).
\end{align}
This is saying that $\II(\UU, \PP)$ is nothing but the parity of the multiplicity of eigenvalue $1$ of $\AAA$.

Without further restrictions, perturbations of an operator in the operator norm might lead to unpredictable variation of the multiplicity of a fixed eigenvalue. For instance, let $B_t = t e_1\otimes e_1$. Then $B_t$ is continuous in $t$ in the operator norm topology. But the multiplicity (and the parity) of the eigenvalue $1$ of $B_t$ is discontinuous at $t = 1$.

Luckily, when $(\UU, \PP)$ is $\Theta$-admissible, any eigenvalue of $\AAA$ that is not $\pm 1$ will have even multiplicities. Hence, perturbations that preserve $\Theta$-admissibility will not change the parity of the multiplicity of eigenvalue $1$, i.e., the $\Z_2$-index is preserved. This is the main idea of the norm stability result, \Cref{prop-2b}, below:
\begin{proposition}[Local determinacy]\label{prop-2b}
    Let $(\UU _1, \PP _1)$, $(\UU _2, \PP _2)$ be two $\Theta$-admissible pairs, such that $\AAA _1 = \UU _1^* [\UU _1,\PP _1]$ and $\AAA _2 = \UU _2^* [\UU _2,\PP _2]$ satisfy:
    \begin{equation}
        \label{eq-3a}
    \|  \AAA _1 - \AAA _2\| _{\Tr}<\frac{2^{-8}}{ (\|  \AAA _1 \| _{\HS}^2 + \|  \AAA _2 \| _{\HS}^2 + 1)^2}.
    \end{equation}
    Then $\II(\UU _1, \PP _1) = \II(\UU _2, \PP _2)$.
\end{proposition}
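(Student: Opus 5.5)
The plan is to study the spectra of $\AAA_1,\AAA_2$ near $1$ and show that the parity of the multiplicity of the eigenvalue $1$ cannot change under a small trace-class perturbation, provided the perturbation respects $\Theta$-admissibility. The argument has two ingredients: an algebraic (Kramers-type) step and an analytic step.

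\textbf{Algebraic step.} We follow \cite{ASS94}. With $\QQ=\UU^*\PP\UU$, $\AAA=\PP-\QQ$ and $\BBB=\PP^\perp-\QQ$, one checks
\[
\AAA^2+\BBB^2=\1, \qquad \AAA\BBB=-\BBB\AAA .
\]
Hence $\BBB$ maps $\ker(\AAA-\lambda)$ to $\ker(\AAA+\lambda)$, and it is injective there whenever $\lambda^2\neq 1$.

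Next we bring in $\Theta$. Since $\PP$ is $\Theta$-invariant and $\UU$ is $\Theta$-odd, we have $\Theta\QQ\Theta^{-1}=\UU\PP\UU^*$. I would then use the antiunitary $\Theta\UU$ (or $\UU\Theta$) and compute the relations
\[
(\Theta\UU)\,\PP\,(\Theta\UU)^{-1}=\QQ, \qquad (\Theta\UU)\,\QQ\,(\Theta\UU)^{-1}=\PP .
\]
Consequently $\Theta\UU$ anticommutes with $\AAA$ and swaps eigenspaces for $\lambda$ and $-\lambda$, since eigenvalues are real. Composing with $\BBB$ gives an antilinear map $J=\BBB\,\Theta\UU$ (suitably normalized) that commutes with $\AAA$ on $\ker(\AAA-\lambda)$.

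The key check is that $J^2$ is a negative multiple of the identity there. I expect this to come from $\Theta^2=-1$ together with $\BBB^2=1-\lambda^2>0$. Granting it, $J/\sqrt{1-\lambda^2}$ is a quaternionic structure on $\ker(\AAA-\lambda)$, so every eigenvalue $\lambda\in(-1,1)$ of $\AAA$ has even multiplicity. Getting the sign of $J^2$ right is the main obstacle, and I would verify it carefully by direct computation.

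\textbf{Analytic step.} Define the Riesz projection $R_j=\1_{(1-\delta,\infty)}(\AAA_j)$, with $\delta$ chosen in a spectral gap. The spectrum of $\AAA_j$ lies in $[-1,1]$. Eigenvalues in $(1-\delta,1)$ pair up, so
\[
\dim\ker(\1-\AAA_j)\equiv\operatorname{rank}R_j \pmod 2 .
\]
The difficulty is that no spectral gap is given a priori. I would handle this by a counting argument. The number of eigenvalues of $\AAA_j$ above $1/2$ is at most $4\|\AAA_j\|_{\HS}^2$. Hence, inside $[1/2,1)$, there is some gap of width at least of order
\[
\big(\|\AAA_1\|_{\HS}^2+\|\AAA_2\|_{\HS}^2+1\big)^{-1}
\]
that avoids the spectra of both operators. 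Choose the cutoff $1-\delta$ at the center of this gap.

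Now form the contour integral for $R_1-R_2$ around a circle in this gap. Using
\[
\|R_1-R_2\|_{\Tr}\lesssim \frac{\|\AAA_1-\AAA_2\|_{\Tr}}{\text{gap}^2},
\]
the hypothesis \eqref{eq-3a} (with its squared denominator and the constant $2^{-8}$) makes $\|R_1-R_2\|<1$. Two projections at norm distance less than $1$ have equal rank. Therefore the ranks of $R_1$ and $R_2$ agree, so $\II(\UU_1,\PP_1)=\II(\UU_2,\PP_2)$.

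If the contour bound turns out too crude, the alternative is a Weyl/Lidskii-type eigenvalue perturbation estimate in trace norm. Either way, the constants should match the $2^{-8}$ and the square in \eqref{eq-3a}.
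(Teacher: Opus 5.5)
Your proposal is correct and follows the paper's proof essentially step by step. It uses a Kramers-type pairing on the eigenspaces of $\AAA$ for $\lambda\neq\pm1$ via $J=\Theta\UU\BBB$, then a pigeonhole choice of a spectral gap in $[1/2,1]$ from the bound of $4\|\AAA_j\|_{\HS}^2$ on the number of eigenvalues there, and finally a trace-norm bound on the difference of Riesz projections through the resolvent identity. The only cosmetic differences are two. You get even multiplicity from the quaternionic structure $J^2=(\Theta\UU)^2\BBB^2=-\BBB^2=-(1-\lambda^2)$ on $\ker(\AAA-\lambda)$, rather than from the paper's explicit orthogonal splitting into planes $\Span\{v,Jv\}$; the sign you were worried about is right, since $(\Theta\UU)^2=\UU^*\Theta^2\UU=-\1$ and $\Theta\UU$ commutes with $\BBB$. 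You also conclude from $\|R_1-R_2\|<1$ instead of from integrality of $\Tr R_j$, and both conclusions are valid.
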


 Before we prove the proposition, we first recall the even-multiplicity property away from $\pm1$.

\begin{lemma}\label{lemma-even-multiplicity} Assume $(\UU ,\PP )$ is $\Theta$-admissible. Let $m_\lambda(\AAA)$ denote the multiplicity of eigenvalue $\lambda$ of $\AAA = \UU^*[\UU, \PP]$. Then $m_\lambda(\AAA)$ is even for all $\lambda\neq \pm 1$.
\end{lemma}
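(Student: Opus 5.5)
The plan is to prove \Cref{lemma-even-multiplicity} as a Kramers-degeneracy argument. For each $\lambda\neq\pm1$ I will build an antiunitary operator $K$ that maps the eigenspace $E_\lambda:=\ker(\AAA-\lambda)$ to itself and satisfies $K^2=-\Id$. Two ingredients go into $K$: the algebraic relations of \cite{ASS94} between $\AAA$ and $\BBB$, and a modified time-reversal $J:=\UU^*\Theta$.

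\textbf{Step 1 (algebra of $\AAA,\BBB$).} Using only that $\PP,\QQ$ are orthogonal projections, I expand to get
\[
\AAA^2+\BBB^2=\Id,\qquad \AAA\BBB=-\BBB\AAA .
\]
For the first, the cross terms $\PP\QQ+\QQ\PP+\PP^\perp\QQ+\QQ\PP^\perp$ sum to $2\QQ$. The second follows similarly. Consequently $\BBB$ maps $E_\lambda$ into $E_{-\lambda}$. On $E_\lambda$ we have $\BBB^2=(1-\lambda^2)\Id=:c\,\Id$ with $c>0$ because $\lambda\neq\pm1$. Since $\BBB$ is self-adjoint, $\|\BBB\psi\|^2=c\|\psi\|^2$ for $\psi\in E_\lambda$.

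\textbf{Step 2 (modified time-reversal).} Set $J=\UU^*\Theta$, which is antiunitary. Admissibility \eqref{eq-2e} gives $\Theta\UU^*\Theta^{-1}=\UU$. Combined with \eqref{eq-2f}, this yields $J^2=\UU^*(\Theta\UU^*\Theta^{-1})\Theta^2=-\Id$. Next, $\Theta\QQ\Theta^{-1}=\UU\PP\UU^*$, so $\Theta\AAA\Theta^{-1}=\PP-\UU\PP\UU^*$, and therefore
\[
J\AAA J^{-1}=\UU^*\PP\UU-\PP=-\AAA .
\]
Likewise $J\BBB J^{-1}=\UU^*\PP^\perp\UU-\PP=\Id-\QQ-\PP=\BBB$. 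Since $J$ is antilinear and $\lambda$ is real (as $\AAA$ is self-adjoint), $J$ maps $E_{-\lambda}$ into $E_\lambda$.

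\textbf{Step 3 (Kramers).} Define $K:=c^{-1/2}J\BBB$ on $E_\lambda$. By Steps 1--2, $K$ maps $E_\lambda\to E_{-\lambda}\to E_\lambda$. It is antilinear and norm-preserving, hence antiunitary on $E_\lambda$. It also satisfies
\[
K^2=c^{-1}J\BBB J\BBB=c^{-1}J^2\BBB^2=-\Id .
\]
For $\lambda\neq0$ the space $E_\lambda$ is finite-dimensional, because $\AAA$ is compact (indeed trace class). I then apply the standard Kramers argument. For any $\psi$, antiunitarity and $K^2=-\Id$ give $\lr{\psi,K\psi}=\lr{K^2\psi,K\psi}=-\lr{\psi,K\psi}$, so $\psi\perp K\psi$. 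Hence $\Span\{\psi,K\psi\}$ is a $2$-dimensional $K$-invariant subspace. Its orthogonal complement in $E_\lambda$ is also $K$-invariant, since $K$ is antiunitary. Induction on dimension then gives $\dim E_\lambda$ even.

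For $\lambda=0$ the same argument works whenever $\ker\AAA$ is finite-dimensional. If it is infinite-dimensional, the statement is vacuous or meaningless, and only $\lambda=1$ matters for the index anyway. The main obstacle is Step 2: $\Theta$ itself neither commutes nor anticommutes with $\AAA$. The key observation is that conjugating by $\UU^*$ turns $\Theta$ into an antiunitary $J$ that anticommutes with $\AAA$, commutes with $\BBB$, and still squares to $-\Id$.
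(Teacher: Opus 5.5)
Your proof is correct and is essentially the paper's argument. By admissibility $\UU^*\Theta=\Theta\UU$, so your $K=c^{-1/2}\UU^*\Theta\BBB$ is the paper's $J=\Theta\UU\BBB$ rescaled by $(1-\lambda^2)^{-1/2}$, and your Kramers induction (orthogonality of $\psi$ and $K\psi$, then invariance of the orthogonal complement) matches the paper's Steps 1--3; normalizing to $K^2=-\Id$ only streamlines the inner-product checks. One small fix: $c>0$ needs $|\lambda|<1$, not just $\lambda\neq\pm1$. This follows at once from $c\|\psi\|^2=\|\BBB\psi\|^2\ge 0$, or equivalently from $\AAA^2+\BBB^2=\Id$, which gives $\|\AAA\|\le 1$.
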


\begin{proof} \textbf{0.} By \eqref{eq-2e}, one can easily check
\begin{equation}
(\Theta \UU ) \AAA  = -\AAA  (\Theta \UU ), \qquad (\Theta \UU ) \BBB = \BBB (\Theta \UU ).
\end{equation}
Moreover, direct computations give $\AAA ^2+\BBB^2 = \Id$ and $\AAA \BBB +\BBB\AAA  = 0$. As a consequence, $\|  \AAA \| ^2 = \|  \AAA ^2 \|  \leq 1$ and $\Spec(\AAA ) \subset [-1,1]$. In particular, any eigenvalue of $\AAA $, different from $\pm 1$, is in $(-1,1)$.

\textbf{1.} Assume now that $\lambda \neq \pm 1$ is an eigenvalue of $\AAA $. Let $\EE$ denote the corresponding eigenspace of $\AAA $ at energy $\lambda$. Since $\AAA $ is compact, $\EE$ is finite dimensional as long as $\lambda \neq 0$. Let $J = \Theta \UU  \BBB$.  Given any $v\neq 0 \in \EE$, we claim that $\Span\{v, Jv\}$ is a closed subspace of $\EE$ with dimension $2$. Indeed, we first note that
\[
\AAA Jv = \AAA \Theta \UU  \BBB v = -\Theta \UU  \AAA \BBB v = \Theta \UU  \BBB  \AAA v = J\AAA v = \lambda Jv.
\]
Hence $Jv\in \EE$. Moreover, we note
\begin{align*}
    \langle v, Jv\rangle = \langle  v, \Theta \UU  \BBB  v\rangle = \langle v, \BBB \UU ^*\Theta^* v \rangle = \langle v, -\BBB \Theta \UU  v \rangle = -\langle v, \Theta \UU  \BBB  v\rangle = -\langle v, Jv\rangle = 0,
\\
\|  Jv\| ^2 = \langle Jv, Jv\rangle = \langle \Theta \UU  \BBB v, \Theta \UU  \BBB v\rangle = \langle \UU \BBB v, \UU \BBB v \rangle = \langle v, \BBB ^2 v \rangle = (1 - \lambda^2)\|  v\| ^2 \neq 0.
\end{align*}
As a result, $v$ and $Jv$ are linearly independent. Hence $\Span\{v, Jv\}$ is a closed two dimensional subspace of $\EE$.

\textbf{2.} For any $w \perp \Span\{v, Jv\}$, we have $\Span \{w, Jw\} \perp \Span\{v, Jv\}$. In fact, we note
\[
J^* = \BBB \UU ^*\Theta^* = -\BBB \UU ^* \Theta = -\BBB  \Theta \UU  = -\Theta \UU  \BBB  = -J, \text{~~and~~} J^* J = \BBB \UU ^* \Theta^* \Theta \UU  \BBB  = \BBB ^2.
\]
As a result,
\[
\langle Jw, v\rangle = \langle J^*v, w\rangle = \langle -Jv, w\rangle = 0,
\]
\[
\langle Jw, Jv\rangle = \langle J^*Jv, w\rangle = \langle \BBB ^2v, w\rangle = (1 - \lambda^2)\langle v, w\rangle = 0.
\]

\textbf{3.} We now pick inductively $v_1, \dots, v_k \in \EE$ such that
\begin{equation}
    0 \neq v_k  \perp   \bigoplus_{i = 1}^{k -1}\Span\{v_i, Jv_i\},
\end{equation}
as long as
\begin{equation}
    \EE  \neq \bigoplus_{i = 1}^{k -1}\Span\{v_i, Jv_i\}.
\end{equation}
Because $\AAA $ is compact, $\dim(\EE)$ is finite and hence this procedure ends after, say, $n$ steps. This implies that $\dim \EE = 2n$, and this completes the proof.
\end{proof}

\begin{corollary}\label{cor-spectral-count}
    If $(\UU , \PP )$ is admissible and $0 < a < 1 < b$, then modulo 2:
    \[
    \II(\UU , \PP ) \equiv \Tr\big( \1_{(a, b)}(\AAA ) \big).
    \]
\end{corollary}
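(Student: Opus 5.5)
Since $\AAA$ is compact and $\|\AAA\|\le 1$ (Step 0 in the proof of \Cref{lemma-even-multiplicity}), its spectrum in $(a,b)\subset(0,\infty)$ is a finite set of eigenvalues of finite multiplicity. The only one that can reach past $1$ is $1$ itself, since $\Spec(\AAA)\subset[-1,1]$. The plan is to split the trace according to the eigenvalue $1$ and the rest, and let \Cref{lemma-even-multiplicity} kill the rest modulo $2$.

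\textbf{Step 1 (finiteness).} The interval $(a,b)$ has closure away from $0$ because $a>0$. By compactness of $\AAA$, only finitely many eigenvalues $\lambda_1,\dots,\lambda_k$ lie in $(a,b)$, each with finite multiplicity. So $\1_{(a,b)}(\AAA)$ is a finite-rank projection, and its trace is the sum of the multiplicities:
\[
\Tr\big(\1_{(a,b)}(\AAA)\big)=\sum_{j} m_{\lambda_j}(\AAA).
\]
Here we use that $\AAA=\PP-\UU^*\PP\UU$ is self-adjoint, so the spectral projection is the orthogonal projection onto the sum of eigenspaces, and algebraic and geometric multiplicities coincide.

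\textbf{Step 2 (splitting).} Since $\Spec(\AAA)\subset[-1,1]$ and $b>1$, we have $(a,b)\cap\Spec(\AAA)\subset(a,1]$. Hence
\[
\Tr\big(\1_{(a,b)}(\AAA)\big)=m_1(\AAA)+\sum_{\lambda\in(a,1)} m_\lambda(\AAA).
\]
Each $\lambda\in(a,1)$ satisfies $\lambda\ne\pm1$, so $m_\lambda(\AAA)$ is even by \Cref{lemma-even-multiplicity}. Reducing modulo $2$ gives $\Tr(\1_{(a,b)}(\AAA))\equiv m_1(\AAA)=\dim\ker(\Id-\AAA)$.

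\textbf{Step 3 (identification).} By \eqref{eqdef-UPU} and \Cref{lemma-2a}, $\dim\ker(\Id-\AAA)\equiv\II(\UU,\PP)$, which concludes the proof.

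There is no serious obstacle. The one point to state carefully is that the spectral projection onto $(a,b)$ has finite rank. This uses compactness of $\AAA$, which follows from $[\UU,\PP]$ being trace class, together with $a>0$.
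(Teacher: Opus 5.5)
Your proof is correct and follows the paper's argument. Compactness of $\AAA$ together with $a>0$ makes $\1_{(a,b)}(\AAA)$ finite rank; \Cref{lemma-even-multiplicity} then discards, modulo $2$, every eigenvalue in $(a,b)$ other than $1$; finally \eqref{eqdef-UPU} identifies $m_1(\AAA)$ with $\II(\UU,\PP)$. Your remarks on the self-adjointness of $\AAA$ and on $\Spec(\AAA)\subset[-1,1]$ simply spell out details the paper leaves implicit.
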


\begin{proof}
    Since $\AAA $ is compact, it has finitely many eigenvalues in $(a, b)$ for $0<a<1<b$, hence $\1_{(a, b)}(\AAA )$ has finite rank.
    By \Cref{lemma-even-multiplicity}, $m_\lambda(\AAA ) \equiv 0 \mod 2$ for $a<\lambda\neq 1<b$. Hence,
    \[
    \Tr(\1_{(a, b)}(\AAA )) = \sum_{\lambda \in (a, b)} m_\lambda(\AAA ) \equiv m_1(\AAA ) \equiv \dim \ker(\1 - \AAA ).
    \]
    This completes the proof.
\end{proof}

Now we are ready to prove \Cref{prop-2b}:
\begin{proof}[Proof of \Cref{prop-2b}] 1. Define
\begin{align}
    r := \frac{2^{-4}}{\|  \AAA _1\| _{\HS}^2 +\|  \AAA _2\| _{\HS}^2 + 1},
\end{align}
so that the assumption \eqref{eq-3a} can be rewritten as $\|  \AAA _1 - \AAA _2\| _{\Tr} < r^2$. We first claim that there exists $a\in (1/2,1)$  such that
\begin{equation}\label{eq-1m}
    d(a, \Sigma(\AAA _1) \cup \Sigma(\AAA _2)) \geq r.
\end{equation}
Indeed, because $\AAA _1, \AAA _2$ are compact, their spectrum away from $0$ consist of eigenvalues with finite multiplicity, and
   \[
   \|  \AAA _i\| _{\HS}^2  \geq \sum_{\lambda_k\in[1/2,1]} |\lambda_k|^2\geq \frac{1}{4}\cdot \#\big\{\Sigma(\AAA _i)\cap [1/2,1]\big\}, \qquad i = 1,2.
   \]
   As a result,
   \begin{equation}
       \label{eq-3b}
   \#\big\{\big(\Sigma(\AAA _1) \cup \Sigma(\AAA _2)\big) \cap [1/2, 1]\big\} \leq 4 (\|  \AAA _1\| _{\HS}^2 + \|  \AAA _2 \| _{\HS}^2).
   \end{equation}
   Now evenly cut $[1/2,1]$ in $N + 1$ consecutive intervals, where $N$ is the integer closest to $4 (\|  \AAA _1\| _{\HS}^2 + \|  \AAA _2 \| _{\HS}^2)+1$. By the pigeonhole principle, one of them must not intersect $\Sigma(\AAA _1) \cup \Sigma(\AAA _2)$. If $a$ is the center of this interval, then
   \begin{align}
       d\big( a, \Sigma(\AAA _1) \cup \Sigma(\AAA _2) \big) \geq \dfrac{1}{2(N+1)} \geq r,
   \end{align}
   as claimed. This proves \eqref{eq-1m}.

2.  Let $a$ satisfying \eqref{eq-1m} and $b=a+1$. By \Cref{cor-spectral-count}, $\II(\UU _i, \PP _i) \equiv \Tr(\1_{(a, b)}(\AAA _i))$. Let $\Gamma$ be the rectangle with vertices $(a, \pm \tfrac{1}{2})$, $(b, \pm\tfrac{1}{2})$. We see that
    \begin{align}
     \1_{(a, b)}(\AAA _1) -  \1_{(a,b)}(\AAA _2) &= \frac{1}{2\pi i } \oint_{\Gamma} (z - \AAA _1)^{-1} - (z - \AAA _2)^{-1} dz\\
     & = \frac{1}{2 \pi i}\oint_\Gamma (z - \AAA _1)^{-1}(\AAA _1 - \AAA _2)(z - \AAA _2)^{-1} dz
    \end{align}
Taking trace norm on both sides, and using our assumption $\|  \AAA _1 - \AAA _2\| _{\Tr} < r^2$, we obtain
   \begin{align}
      \|  \1_{(a, b)}(\AAA _1) - \1_{(a, b)}(\AAA _2) \| _{\Tr}&\leq \frac{1}{2\pi}\oint_\Gamma \|  (z - \AAA _1)^{-1} \|  \|  \AAA _1 - \AAA _2 \| _{\Tr}  \|  (z - \AAA _2)^{-1} \|  |dz|\\
       &\leq \frac{|\Gamma|}{2\pi}\frac{1}{d(a, \Sigma(\AAA _1))}\cdot\|  \AAA _1 - \AAA _2\| _{\Tr}\cdot \frac{1}{d(a, \Sigma(\AAA _2))}\\
       &\leq \frac{4}{2\pi}\frac{r^2}{r^2} <1.
   \end{align}
Because $\Tr(\1_{(a, b)}(\AAA _i))$ are integers, we conclude that they are equal, and hence $\II(\UU _1, \PP _1) \equiv \II(\UU _2, \PP _2)$. \end{proof}

\newpage
\section[Z2-index for Hamiltonians]{\texorpdfstring{$\Z_2$-index for Hamiltonians}{Z2-index for Hamiltonians}}\label{sec-3}
In the previous section, we discussed the general theory of $\Z_2$-index $\II(\UU, \PP)$ for a pair of $\Theta$-admissible unitary and projection $(\UU, \PP)$. In this section, we apply the general theory to more concrete objects -- the geometric bulk index $\II_{\Omega, W}(H)$, the bulk index $\II(H)$, and the edge index $\NN(H_e)$ -- see \eqref{eqdef-geobulk}, \eqref{eqdef-bulk}, and \eqref{eqdef-edge}. Indeed,
\begin{align}
  &\II_{\Omega, W}(H) = \II(U_\Omega, \1_W), \qquad U_\Omega = e^{-2\pi i \1_{\Omega} \Pi \1_{\Omega} }, \quad \Pi = \1_{(-\infty, \inf \GG)}(H),\\
  &\II(H) =  \II(U_{\Hh_2}, \1_{\Hh_1}), \\
 &\NN(H_e) = \II(e^{-2\pi i g(H_e)}, \1_{W}).
\end{align}
\subsection{Kernel estimates}
We review some definitions and results from \cite{DZ24}:

\begin{definition}\label{def-1} We say that a self-adjoint operator $A$ on $\ell^2(\Z^2)$ is \textit{exponentially-short-range (ESR)} if its kernel satisfies, for some $\nu >0$,
\begin{equation}\label{eq-9z}
    |A(x, y)|\leq \nu^{-1}e^{-2\nu |x-y|}, \qquad x, y\in \Z^2.
\end{equation}
Moreover, we say that $A$ decays exponentially away from $S \subset \R^2$ if, for some $\nu >0$,
\begin{equation}
    |A(x, y)|\leq \nu^{-1}e^{-2\nu |x-y|-\nu d(x,S) - \nu d(y,S)}, \qquad x, y\in \Z^2.
\end{equation}
\end{definition}

\begin{definition}
    An operator $A$ on $\ell^2(\Z^2,\C^m)$ is \textit{polynomially-short-range (PSR)} if for any $N>0$, there is $C_N>0$ such that
\[
    |A(x, y)|\leq C_N (|x-y|+1)^{-N}, \qquad x, y\in \Z^2.
\]
Moreover, we say that $A$ decays polynomially away from $S \subset \R^2$ if for any $N>0$, there is $C_N>0$ such that
\begin{equation}
    |A(x, y)|\leq C_N (|x-y| + d(x,S) + d(y,S) +1)^{-N}, \qquad x, y\in \Z^2.
\end{equation}
\end{definition}

Note that ESR (respectively exponential decay) is a stronger condition than PSR (respectively polynomial decay). Furthermore, smooth functional calculus of ESR operators is at least PSR\footnote{one could expect ESR for analytic functional calculus on the spectrum of the ESR operator, e.g. $\Pi = \1_{(-\infty, \MG)}(H)$, as in \cite[Lemma 3.1]{DZ23}}:
\begin{lemma}\label{lemma-functional-calculus}
    Let $f \in C^\infty(\R)$. Assume that $A,B$ are two ESR self-adjoint operators such that $A-B$ decays exponentially away from $S\subset\Z^2$. Then $f(A)$ and $f(B)$ are PSR, and $f(A)-f(B)$ decays polynomially away from $S$.
\end{lemma}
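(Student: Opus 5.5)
The plan is the Helffer--Sjöstrand functional calculus, reducing everything to resolvent kernel bounds via Combes--Thomas estimates.

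\textbf{Step 1 (almost-analytic extension).} Since $A,B$ are bounded, only the values of $f$ on a compact interval $I\supset \Spec(A)\cup\Spec(B)$ matter. We may therefore replace $f$ by $\chi f\in C_c^\infty(\R)$ with $\chi=1$ near $I$. Let $\tilde f\in C_c^\infty(\C)$ be an almost-analytic extension with $|\bar\partial \tilde f(z)|\le C_M|\Im z|^M$ for every $M$. Then
\[
f(A)=\frac1\pi\int_\C \bar\partial\tilde f(z)\,(z-A)^{-1}\,dm(z),
\]
and likewise for $B$.

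\textbf{Step 2 (PSR of $f(A)$).} For an ESR operator $A$ and $|\Im z|=\eta\in(0,1]$, the Combes--Thomas estimate gives
\[
|(z-A)^{-1}(x,y)|\le \frac{C}{\eta}\,e^{-c\eta|x-y|}.
\]
This is proved by conjugating with $e^{\mu\lr{x}}$-type weights, with $\mu\sim\eta$; the ESR bound makes $e^{\mu\phi}Ae^{-\mu\phi}-A$ of norm $O(\mu)$. Insert this into the integral and use $|\bar\partial\tilde f|\le C_M\eta^M$ together with $e^{-c\eta r}\le C_N(\eta r)^{-N}$. Choosing $M\ge N+1$ yields $|f(A)(x,y)|\le C_N(1+|x-y|)^{-N}$.

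\textbf{Step 3 (difference).} Write
\[
f(A)-f(B)=\frac1\pi\int \bar\partial\tilde f(z)\,(z-A)^{-1}(A-B)(z-B)^{-1}\,dm(z).
\]
The kernel of the integrand at $(x,y)$ is
\[
\sum_{u,v}R_A(x,u)\,(A-B)(u,v)\,R_B(v,y),
\]
bounded by
\[
\eta^{-2}\sum_{u,v} e^{-c\eta|x-u|}\,e^{-2\nu|u-v|-\nu d(u,S)-\nu d(v,S)}\,e^{-c\eta|v-y|}.
\]
By the triangle inequality, $|x-u|+|u-v|+|v-y|\ge|x-y|$. Also $d(x,S)\le |x-u|+d(u,S)$ and $d(y,S)\le |y-v|+d(v,S)$. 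Taking $c\eta\le\nu$, part of each exponential can be spent to obtain a factor
\[
e^{-c'\eta(|x-y|+d(x,S)+d(y,S))},
\]
while keeping enough decay, say $e^{-c''\eta(|x-u|+|v-y|)}$, to sum over $u,v$ at a cost of $O(\eta^{-4})$. The total is
\[
C\eta^{-6}\,e^{-c'\eta(|x-y|+d(x,S)+d(y,S))},
\]
and the almost-analyticity factor $\eta^M$ with $M$ large converts this into $C_N(1+|x-y|+d(x,S)+d(y,S))^{-N}$, as in Step 2.

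\textbf{Main obstacle.} The main point of care is the Combes--Thomas estimate with explicit $\eta$-dependence. One needs a Lipschitz weight, for instance $\mu|x-x_0|$ regularized, and a Neumann-series argument showing that $\|e^{\mu\phi}(z-A)^{-1}e^{-\mu\phi}\|\le 2/\eta$ for $\mu\le c\eta$, using
\[
\|e^{\mu\phi}Ae^{-\mu\phi}-A\|\le C\mu
\]
from the ESR decay (Schur test). A second point of care is bookkeeping the decay in $d(\cdot,S)$ through the triangle inequality; this works uniformly since all bounds hold pointwise on $\Z^2$. Everything else is routine summation.
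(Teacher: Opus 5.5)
Your proposal is correct and follows the same route as the paper, which also writes $f(A)$ via the Helffer--Sj\"ostrand formula, bounds the resolvent kernels with a Combes--Thomas estimate, and defers the details to \cite[Equation (3.9)]{DZ24}. Your resolvent identity $(z-A)^{-1}-(z-B)^{-1}=(z-A)^{-1}(A-B)(z-B)^{-1}$, the $\eta$-dependent Combes--Thomas decay, and the triangle-inequality handling of the $d(\cdot,S)$ weights fill in the details the paper leaves to that reference.
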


The proof decomposes $f(A)$ as an integral of the resolvents $(z-A)^{-1}$ using a Helffer-Sj\"ostrand formula. It then estimates the resolvent $(z-A)^{-1}$ via a Combes--Thomas estimate. We refer to the proof of \cite[Equation (3.9)]{DZ24} for details. Similarly, it was proved in \cite[Equation (3.10)]{DZ24} that

\begin{lemma}\label{lem-2b}
    $g(H_e) - \1_{\Omega}g(H_+)\1_\Omega - \1_{\Omega^c} g(H_-) \1_{\Omega^c}$ is PSR and decays polynomially away from $\p \Omega$ for $g$ in \Cref{def-edge_index}.
\end{lemma}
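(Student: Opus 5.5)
We need to prove Lemma \ref{lem-2b}: the operator
\[
D := g(H_e) - \1_\Omega g(H_+)\1_\Omega - \1_{\Omega^c} g(H_-)\1_{\Omega^c}
\]
is PSR and decays polynomially away from $\p\Omega$. The plan is to split $D$ into a piece controlled by \Cref{lemma-functional-calculus} and a commutator-type piece localized near $\p\Omega$.

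\textbf{Step 1: comparison operator.} Set $H_0 := \1_\Omega H_+ \1_\Omega + \1_{\Omega^c} H_- \1_{\Omega^c}$. This is self-adjoint and ESR, since its kernel is bounded by that of $H_+$ or $H_-$. By \Cref{def:DW edge Hamiltonian}, $E = H_e - H_0$ satisfies $|E(x,y)|\le \nu^{-1}e^{-\nu d(x,\p\Omega)}$. Both $H_e$ and $H_0$ are ESR, so $|E(x,y)|\lesssim e^{-2\nu'|x-y|}$ as well. Taking a geometric mean of the two bounds, and using $d(y,\p\Omega)\le d(x,\p\Omega)+|x-y|$, gives
\[
|E(x,y)|\lesssim e^{-2\nu''|x-y|-\nu'' d(x,\p\Omega)-\nu'' d(y,\p\Omega)}
\]
for some smaller $\nu''$. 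Hence $H_e - H_0$ decays exponentially away from $\p\Omega$. \Cref{lemma-functional-calculus} then gives that $g(H_e)$ and $g(H_0)$ are PSR, and that $g(H_e)-g(H_0)$ decays polynomially away from $\p\Omega$.

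\textbf{Step 2: splitting $H_0$.} Write $H_0 = \1_\Omega H_+ \1_\Omega + \1_{\Omega^c} H_- \1_{\Omega^c}$, a direct sum on $\ell^2(\Omega)\oplus\ell^2(\Omega^c)$. Functional calculus respects this decomposition, so
\[
g(H_0) = \1_\Omega\, g\big(\1_\Omega H_+\1_\Omega\big|_{\Omega}\big)\,\1_\Omega + \1_{\Omega^c}\, g\big(\1_{\Omega^c} H_-\1_{\Omega^c}\big|_{\Omega^c}\big)\,\1_{\Omega^c}.
\]
It therefore suffices to show that $\1_\Omega\big[g(\1_\Omega H_+\1_\Omega + \1_{\Omega^c} H_+ \1_{\Omega^c}) - g(H_+)\big]\1_\Omega$ decays polynomially away from $\p\Omega$. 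Here the $\Omega^c$ block is added so that the operator is defined on the whole space, and the $\1_\Omega$ sandwich kills it. For this, apply \Cref{lemma-functional-calculus} once more to the pair $A = H_+$ and $B = \1_\Omega H_+\1_\Omega + \1_{\Omega^c}H_+\1_{\Omega^c}$. Their difference is
\[
A - B = \1_\Omega H_+\1_{\Omega^c} + \1_{\Omega^c}H_+\1_\Omega.
\]
Its kernel vanishes unless $x$ and $y$ lie on opposite sides of $\p\Omega$. In that case $|x-y|\ge d(x,\p\Omega)$ and $|x-y|\ge d(y,\p\Omega)$, up to lattice-discretization constants. So $e^{-2\nu|x-y|}\le e^{-\nu|x-y| -\tfrac{\nu}{2}(d(x,\p\Omega)+d(y,\p\Omega))}$, which is exponential decay away from $\p\Omega$. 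The same argument with $H_-$ handles the $\Omega^c$ block. Summing the pieces, $D$ decays polynomially away from $\p\Omega$. It is also PSR, being a sum of products of projections with PSR operators, and multiplication by $\1_\Omega$ preserves kernel bounds.

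\textbf{Main obstacle.} The delicate point is bookkeeping rather than new analysis. One must combine the pointwise bound on $E$, which only controls $d(x,\p\Omega)$, with off-diagonal ESR decay to get the symmetric two-sided decay that \Cref{lemma-functional-calculus} requires. One must also ensure that $\p\Omega$ versus lattice sites causes at most an $O(1)$ loss in the inequality $|x-y|\ge d(x,\p\Omega)$. Beyond this, everything reduces to \Cref{lemma-functional-calculus} (the Helffer--Sjöstrand/Combes--Thomas argument of \cite{DZ24}), which we take as given.
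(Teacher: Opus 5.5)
Your proof is correct. The paper gives no argument of its own here; it simply imports the estimate from \cite[Equation (3.10)]{DZ24}. You instead derive it from \Cref{lemma-functional-calculus} alone, applied twice.
\begin{itemize}
\item First to the pair $(H_e,H_0)$, with $H_0=\1_\Omega H_+\1_\Omega+\1_{\Omega^c}H_-\1_{\Omega^c}$. Before doing so, you upgrade the one-sided bound on $E$ to two-sided exponential decay away from $\p\Omega$, by taking a geometric mean with the ESR bound and using $d(y,\p\Omega)\le d(x,\p\Omega)+|x-y|$.
\item Then to each pair $(H_\pm,B_\pm)$, with $B_\pm=\1_\Omega H_\pm\1_\Omega+\1_{\Omega^c}H_\pm\1_{\Omega^c}$, using $g(H_0)=\1_\Omega g(B_+)\1_\Omega+\1_{\Omega^c}g(B_-)\1_{\Omega^c}$.
\end{itemize}

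The padding in $B_\pm$ is the essential trick. It makes $H_\pm-B_\pm$ purely off-diagonal and hence localized at $\p\Omega$. The naive pair $(H_+,\1_\Omega H_+\1_\Omega)$ would not satisfy the hypothesis of the lemma.

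What your route buys is that the Helffer--Sj\"ostrand/Combes--Thomas analysis is used only as a black box. A one-shot alternative would be to show that $\1_\Omega(z-H_+)^{-1}\1_\Omega+\1_{\Omega^c}(z-H_-)^{-1}\1_{\Omega^c}$ is a parametrix for $(z-H_e)^{-1}$ with defect localized near $\p\Omega$. One would then integrate against $\bar\p\tilde g$, which requires redoing the $|\operatorname{Im} z|$ bookkeeping.

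Two small remarks:
\begin{itemize}
\item Your lattice caveat is unnecessary. For $x\in\Omega$ and $y\notin\Omega$, the segment $[x,y]$ must meet $\p\Omega$, so $|x-y|\ge\max\{d(x,\p\Omega),d(y,\p\Omega)\}$ holds exactly.
\item You never use the particular shape of $g$, so the estimate holds for every $g\in C^\infty(\R)$. The gap property of $g$ only enters later, to identify $g(H_\pm)=\Pi_\pm$ in the proof of \Cref{thm-2}.
\end{itemize}
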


\subsection{Well-definedness}
\begin{proposition}\label{prop:geometric index well defined}
    If $\Omega,W$ are transverse and $X$ is a $\Theta$-invariant PSR self-adjoint operator such that $X^2-X$ decays polynomially away from $\p \Omega$, then $[e^{-2\pi i X}, \1_W]$ is compact and $\II(e^{-2\pi i X}, \1_W)$ is well-defined.
\end{proposition}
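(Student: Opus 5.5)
We need to show two things. First, $(e^{-2\pi i X},\1_W)$ is $\Theta$-admissible. Second, $[e^{-2\pi i X},\1_W]$ is compact; in fact we aim for trace class, since that is what admissibility requires. We will then invoke \Cref{lemma-2a}.

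\textbf{Symmetry.} Since $X$ is self-adjoint and $\Theta X\Theta^{-1}=X$, antilinearity of $\Theta$ gives
\[
\Theta e^{-2\pi i X}\Theta^{-1}=e^{+2\pi i X}=(e^{-2\pi i X})^*.
\]
This can be checked via the power series, where each $i$ is conjugated, or via the spectral theorem. So $U:=e^{-2\pi i X}$ is $\Theta$-odd. Next, $\1_W$ is a function of the position operators, and $[\Theta,x_j]=0$. Since $\Theta$ is antilinear and commutes with real functions of $x_j$, it commutes with the multiplication operator $\1_W$. So $\1_W$ is $\Theta$-invariant.

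\textbf{Structure of $U$.} The key observation is that $f(t)=e^{-2\pi i t}-1$ vanishes at $t=0,1$. Hence $f(t)=(t^2-t)h(t)$ with $h$ entire. This gives
\[
U-\Id=(X^2-X)\,h(X).
\]
Since $X$ is bounded and PSR, $h(X)$ is a norm-convergent power series. Powers of a PSR operator are PSR, with constants controlled by the series, so $h(X)$ is PSR. Alternatively, \Cref{lemma-functional-calculus}-type Helffer–Sjöstrand arguments apply, since $X$ need only be PSR here; the paper's setting suggests this is acceptable. The product of an operator decaying polynomially away from $\p\Omega$ with a PSR operator again decays polynomially away from $\p\Omega$. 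To see this, write
\[
\sum_z |K(x,z)||L(z,y)|,
\]
and use $d(x,\p\Omega)\le |x-z|+d(z,\p\Omega)$ and the analogous inequality on the right. Thus $D:=U-\Id$ satisfies, for every $N$,
\[
|D(x,y)|\le C_N\bigl(1+|x-y|+d(x,\p\Omega)+d(y,\p\Omega)\bigr)^{-N}.
\]

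\textbf{Commutator estimate.} We have
\[
[U,\1_W]=[D,\1_W]=\1_{W^c}D\1_W-\1_W D\1_{W^c}.
\]
Its kernel is supported on pairs $(x,y)$ lying on opposite sides of $\p W$, where $|x-y|\ge d(x,\p W)$ up to a constant coming from the lattice discretization. Hence, for every $N$,
\[
|[U,\1_W](x,y)|\le C_N\bigl(1+|x-y|+d(x,\p\Omega)+d(x,\p W)\bigr)^{-N}\le C_N\,(1+|x-y|)^{-N/2}\,\Psi_{\Omega,W}(x)^{-N/2}.
\]
Transversality gives $\epsilon>0$ and $R$ such that $\Psi_{\Omega,W}(x)\ge \|x\|^{\epsilon}$ for $\|x\|\ge R$. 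Since $\Psi\ge1$ everywhere, this yields $\Psi(x)^{-N/2}\le C(1+\|x\|)^{-\epsilon N/2}$ for all $x$. Choosing $N$ large, the kernel is bounded by $C(1+|x-y|)^{-M}(1+|x|)^{-M}$ with $M$ arbitrary. This makes it summable enough to be trace class: the entrywise $\ell^1$ bound $\sum_{x,y}|K(x,y)|<\infty$ already implies trace class on $\ell^2(\Z^2,\C^d)$, because $K$ is then a norm-convergent sum of rank-one operators. In particular the commutator is compact, so $\II(e^{-2\pi iX},\1_W)$ is well-defined by \Cref{lemma-2a}.

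The main obstacle is the middle step: making rigorous that $h(X)$ is PSR with uniform constants, and that products preserve polynomial decay away from $\p\Omega$. If the power-series bound on PSR constants is awkward, I would instead use the Helffer–Sjöstrand formula with a smooth compactly supported cutoff of $h$ on $\Spec(X)$, together with Combes–Thomas-type resolvent bounds for PSR operators.
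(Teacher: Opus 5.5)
Your proposal is correct and follows essentially the paper's route. Both use the factorization $e^{-2\pi i x}-1=h(x)(x^2-x)$ with $h$ entire, combine the polynomial decay of $X^2-X$ away from $\p\Omega$ with the localization of a commutator with $\1_W$ near $\p W$, and use transversality to get a summable, hence trace-class, kernel. The only difference is that you commute $\1_W$ directly with $D=(X^2-X)h(X)$ and prove the trace-class bound by hand, whereas the paper expands by the Leibniz rule and cites \cite{DZ24}.

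The step you flag is fine. The norms $\|A\|_N:=\max\bigl\{\sup_x\sum_y |A(x,y)|(1+|x-y|)^N,\ \sup_y\sum_x |A(x,y)|(1+|x-y|)^N\bigr\}$ are submultiplicative because $1+|x-y|\le(1+|x-z|)(1+|z-y|)$. So the power series for $h(X)$ converges in every $\|\cdot\|_N$, which is better suited here than an ESR-based Helffer--Sj\"ostrand lemma, since $X$ is only assumed PSR.
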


\begin{proof}
It is easy to check that $\1_W$ is $\Theta$-invariant and $e^{-2\pi i X}$ is $\Theta$-odd because $X$ is $\Theta$-invariant. By \Cref{lemma-2a}, it remains to prove that $\left[e^{-2i\pi X}, \1_W\right]$ is trace-class.

Because $e^{-2i\pi x} - 1$ vanishes at both $0$ and $1$, there exists an entire function $h$ such that
\begin{equation}
    e^{-2i\pi x}-1 = h(x)(x^2-x).
\end{equation}
Therefore, we have:
\begin{equation}
    \left[\1_W,e^{-2 i\pi X}\right] = \left[\1_W,h(X)\right] (X^2-X) +  h(X)\left[\1_W,X^2-X\right].
\end{equation}

By \Cref{lem-2b}, $h(X)$ is PSR. By \cite[Lemma 2.4]{DZ24}, $\left[\1_W,h(X)\right]$ decays exponentially away from $\p W$. By assumption, $X^2-X$ decays polynomially away from $\p \Omega$. By \cite[Corollary 2.4]{DZ24}, $\left[\1_W,h(X)\right] (X^2-X)$ is trace-class with its kernel decaying polynomially away from both $\p \Omega$ and $\p W$. Likewise, $h(X)\left[\1_W,(X^2-X)\right]$, and hence $[\1_W, e^{-2\pi i X}]$, are trace-class with a kernel that decays polynomially away from $\p \Omega$ and $\p W$. Hence $\II(e^{-2\pi i X}, \1_W)$ is well-defined.

This completes the proof.

\end{proof}
\begin{corollary}\label{cor-indices-well-defined}
    If $\Omega, W$ are transverse, $[U_\Omega, \1_W]$, $[e^{-2\pi i g(H_e)}, \1_W]$ are compact and $\II_{\Omega, W}(H)$, $\II(H)$, $\NN(H_e)$ in \eqref{eqdef-geobulk}, \eqref{eqdef-bulk}, \eqref{eqdef-edge} are well-defined.
\end{corollary}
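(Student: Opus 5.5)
The plan is to deduce all three statements from \Cref{prop:geometric index well defined}. For each index I choose a $\Theta$-invariant PSR self-adjoint operator $X$ with $X^2-X$ decaying polynomially away from a curve that is transverse to $\p W$ (respectively to $\p \Hh_1$). In every case $X$ is $\Theta$-invariant because $\Theta$ commutes with $H$, hence with its real functional calculus, and with multiplication operators $\1_\Omega$. The last point uses $[\Theta,x_j]=0$ together with the fact that $\Theta$ is antilinear while $\1_\Omega$ is real. So the work lies in the decay of $X^2-X$.

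\textbf{Geometric bulk index and bulk index.} Take $X=\1_\Omega\Pi\1_\Omega$, where $\Pi=g(H)$ with $g$ as in \Cref{def-edge_index}. This is legitimate because $H$ has a gap in $\GG$. By \Cref{lemma-functional-calculus} (with $A=B=H$), $\Pi$ is PSR, and hence so is $X$. Since $\Pi^2=\Pi$, we get
\begin{equation*}
X^2-X=\1_\Omega\Pi\1_\Omega\Pi\1_\Omega-\1_\Omega\Pi\1_\Omega=-\1_\Omega\Pi\1_{\Omega^c}\Pi\1_\Omega .
\end{equation*}
The kernel of $\1_\Omega\Pi\1_{\Omega^c}$ is supported on pairs $(x,y)$ with $x\in\Omega$ and $y\in\Omega^c$, so $|x-y|\geq d(x,\p\Omega)$ and $|x-y|\geq d(y,\p\Omega)$. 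The PSR bound for $\Pi$ therefore shows that $\1_\Omega\Pi\1_{\Omega^c}$ decays polynomially away from $\p\Omega$. Composing with the PSR operator $\Pi\1_\Omega$ preserves this decay, by the composition rule for kernels of the type $(1+|x-y|+d(x,S)+d(y,S))^{-N}$ used in \cite{DZ24}. This gives $\II_{\Omega,W}(H)$ for transverse $\Omega,W$. For $\II(H)$, take $\Omega=\Hh_2$ and $W=\Hh_1$. Their boundaries are perpendicular lines, so $\Psi(x)\gtrsim 1+|x|$ and they are transverse.

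\textbf{Edge index.} Take $X=g(H_e)$, which is PSR by \Cref{lemma-functional-calculus}. To see that it is $\Theta$-invariant, I read compatibility as including $[H_e,\Theta]=0$, as is implicit in the definition of $\NN$. Set $Y=\1_\Omega\Pi_+\1_\Omega+\1_{\Omega^c}\Pi_-\1_{\Omega^c}$. By \Cref{lem-2b}, $X-Y$ decays polynomially away from $\p\Omega$. By the computation above applied to $\Pi_\pm$, each of $\1_\Omega\Pi_+\1_\Omega$ and $\1_{\Omega^c}\Pi_-\1_{\Omega^c}$ satisfies the required decay for its square minus itself. The cross terms $\1_\Omega\Pi_+\1_\Omega\1_{\Omega^c}\cdots$ vanish identically, so $Y^2-Y$ decays polynomially away from $\p\Omega$. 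Writing $X^2-X=(Y^2-Y)+(X-Y)X+Y(X-Y)-(X-Y)$, every term is a product of a PSR operator with an operator decaying away from $\p\Omega$, and so it decays away from $\p\Omega$. Applying \Cref{prop:geometric index well defined} then gives compactness (in fact trace-class) of $[e^{-2\pi i g(H_e)},\1_W]$ and well-definedness of $\NN(H_e)$.

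The only step requiring care is the composition estimate: a product of a PSR kernel with a kernel decaying polynomially away from $S$ should again decay away from $S$. I would prove it by splitting $\sum_z$ according to whether $|x-z|$ or $|z-y|$ dominates, and using $d(x,S)\le d(z,S)+|x-z|$. Alternatively, it can be quoted from \cite[\S2]{DZ24}.
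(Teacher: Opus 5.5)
Your proof is correct. For $\II_{\Omega,W}(H)$ and $\II(H)$ it follows the paper's argument: reduce to \Cref{prop:geometric index well defined}, then get decay of $X^2-X$ from $\Pi^2=\Pi$ and the off-diagonal decay of $\Pi$ across $\p\Omega$. Your identity $X^2-X=-\1_\Omega\Pi\1_{\Omega^c}\Pi\1_\Omega$ is the correct one. The paper's displayed $\1_\Omega[\Pi,\1_\Omega]\1_\Omega$ vanishes identically and should read $\1_\Omega[\Pi,\1_\Omega]^2\1_\Omega$.

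For $\NN(H_e)$ your route is different. The paper applies the functional-calculus estimate of \cite{DZ24} directly to $\phi=g^2-g$. Since $\phi$ vanishes on $\sigma(H_\pm)$, the operator $g(H_e)^2-g(H_e)=\phi(H_e)$ decays away from $\p\Omega$ in one step. You instead compare $g(H_e)$ with $Y=\1_\Omega\Pi_+\1_\Omega+\1_{\Omega^c}\Pi_-\1_{\Omega^c}$. You use \Cref{lem-2b} for $X-Y$, the bulk computation for $Y^2-Y$, and an algebraic identity that is the computation of \Cref{lemma-3c} with $X$ and $Y$ exchanged.

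The paper's version is shorter, but it needs the \cite{DZ24} estimate for a general smooth function supported in the gap. Yours needs only \Cref{lem-2b} as stated, for the single function $g$. As a by-product, it checks that $Y$ itself satisfies the hypotheses of \Cref{prop:geometric index well defined}, which is the input the proof of \Cref{thm-2} relies on.

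Your remark that $[H_e,\Theta]=0$ must be assumed is also well taken. \Cref{prop:geometric index well defined} requires $X$ to be $\Theta$-invariant, but \Cref{def:DW edge Hamiltonian} does not impose this on $H_e$. The paper's proof uses it without comment.
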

\begin{proof}
    By \Cref{prop:geometric index well defined}, it remains to prove for $X_1 = \1_{\Omega} \Pi \1_{\Omega}$ and $X_2 = g(H_e)$, we have $X_j^2 - X_j$ decays polynomially away from $\p \Omega$, $j = 1,2$. Indeed,
\[
X_1^2 - X_1 = (\1_\Omega \Pi \1_\Omega)^2 - \1_\Omega \Pi \1_\Omega = \1_\Omega [\Pi,  \1_\Omega] \1_\Omega.
\]
By \cite[Lemma 2.1]{DZ24}, $\Pi$ is PSR, by \cite[Lemma 2.1]{DZ24}, $[\Pi, \1_\Omega]$ decays polynomially away from $\p \Omega$; so does $X_1^2 - X_1$.

Meanwhile, $g\in C^\infty$ is $1$ on $(-\infty, \inf \MG)$ and $0$ on $[\sup \MG, +\infty)$, hence $g^2(x) - g(x) = \phi(x)\in C^\infty$ where $\supp (\phi) \subset \MG$. There is $\rho(x) = \int_0^x \phi(s) ds$ such that $\rho\in C^\infty$, $\rho(x) = 1$ on $[\sup \MG, +\infty)$, $\rho(x) = 0$ on $(-\infty, \inf \MG]$. Hence by \cite[Prop. 3, (3.10)]{DZ24}, $X_2^2 - X_2 = \phi(H_e) = \rho'(H_e)$ decays polynomially away from $\p \Omega$.

\end{proof}
\begin{remark}\label{rmk-3a}
    Note that a byproduct of this proof is when $\Omega$, $W$ are transverse, $[U_\Omega, \1_W]$ and $\AAA_{\Omega, W}:= U_\Omega^*[U_\Omega, \1_W]$ decays polynomially away from both $\p \Omega$ and $\p W$.
\end{remark}
\begin{lemma}\label{lemma-3c}
    Let $\Omega$, $W$ be transverse. If $X, Y$ are PSR, $\Theta$-invariant self-adjoint operators such that both $X^2 - X$ and $X - Y$ decay polynomially away from $\p \Omega$, then $Y^2 - Y$ decays polynomially away from $\p \Omega$ and
    \begin{align}
        \label{eq-1c}
    \II(e^{-2\pi i Y}, \1_W) = \II(e^{-2\pi i X}, \1_W).
    \end{align}
\end{lemma}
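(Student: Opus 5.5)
The plan is a linear homotopy combined with the local determinacy estimate (\Cref{prop-2b}), with the first claim of the lemma as a warm-up.

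\textbf{Step 1: $Y^2-Y$ decays away from $\p\Omega$.} Set $D := Y - X$, so $D$ is PSR, $\Theta$-invariant, and decays polynomially away from $\p\Omega$. Then
\[
Y^2 - Y = (X^2 - X) + XD + DX + D^2 - D .
\]
Each product of a PSR operator with an operator decaying polynomially away from $\p\Omega$ again decays polynomially away from $\p\Omega$. This is the standard composition estimate from \cite{DZ24}: sum over the intermediate lattice point and use $|x-z|+|z-y|\ge |x-y|$ together with $d(x,\p\Omega)\le d(z,\p\Omega)+|x-z|$. Hence $Y^2-Y$ decays polynomially away from $\p\Omega$.

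\textbf{Step 2: the homotopy.} Let $X_t := X + tD$ for $t\in[0,1]$. Each $X_t$ is PSR, self-adjoint and $\Theta$-invariant. Applying Step 1 to the pair $(X,X_t)$, with the difference $tD$, shows that $X_t^2-X_t$ decays polynomially away from $\p\Omega$ with constants uniform in $t$. By \Cref{prop:geometric index well defined}, each pair $(U_t,\1_W)$ with $U_t := e^{-2\pi i X_t}$ is $\Theta$-admissible. Moreover, by the proof of that proposition and \Cref{rmk-3a}, the kernel of
\[
\AAA_t := U_t^*[U_t,\1_W]
\]
decays polynomially away from both $\p\Omega$ and $\p W$, uniformly in $t$. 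Transversality then bounds $\|\AAA_t\|_{\Tr}$, and hence $\|\AAA_t\|_{\HS}$, uniformly in $t$.

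\textbf{Step 3: continuity in trace norm.} This is the main obstacle: we need $t\mapsto\AAA_t$ to be continuous in trace norm, not merely in operator norm. Write $e^{-2\pi i x}-1 = h(x)(x^2-x)$ as in the proof of \Cref{prop:geometric index well defined}. Then
\[
[\1_W,U_t]-[\1_W,U_s] = [\1_W, h(X_t)(X_t^2-X_t) - h(X_s)(X_s^2-X_s)] .
\]
Expanding this difference, every term carries a factor $(t-s)$. Indeed, $X_t^2-X_t-(X_s^2-X_s) = (t-s)(X D+DX)+(t^2-s^2)D^2-(t-s)D$, which decays away from $\p\Omega$. Likewise $h(X_t)-h(X_s)$ is $O(|t-s|)$ as a PSR operator, via the power series of $h$ (or Helffer--Sjöstrand). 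Every term is also a product of a factor decaying away from $\p\Omega$ with a commutator with $\1_W$ decaying away from $\p W$. The trace-class estimate \cite[Corollary 2.4]{DZ24} under transversality therefore gives
\[
\|[\1_W,U_t]-[\1_W,U_s]\|_{\Tr} \le C|t-s| .
\]
Multiplying by the unitary $U_t^*$ and using $\|U_t-U_s\|\le C|t-s|$ yields $\|\AAA_t-\AAA_s\|_{\Tr}\le C'|t-s|$.

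\textbf{Step 4: conclusion.} By Step 2, the right-hand side of \eqref{eq-3a} is bounded below by a constant $c>0$ independent of $t$. Partition $[0,1]$ into finitely many intervals of length less than $c/C'$. Applying \Cref{prop-2b} to consecutive endpoints shows that $\II(U_t,\1_W)$ is constant along the partition. This gives \eqref{eq-1c}.

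Alternatively, one could aim to invoke \Cref{prop-2a}. That route would require $\1_W^\perp+\1_W U_t\1_W$ to be a norm-continuous $\Theta$-odd path, but this operator is not $\Theta$-odd in the required sense without care, so I would prefer the local-determinacy route above.
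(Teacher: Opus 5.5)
Your argument is correct. Step 1 and the linear path $X_t=X+t(Y-X)$ are exactly what the paper does. The difference lies in how you show the index is constant along the path.

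\textbf{The paper's route.} The paper applies \Cref{prop-2a} directly to $T_t=\1_{W^c}+\1_W U_t\1_W$. Your reason for avoiding this route is mistaken. Since $\Theta\1_W\Theta^{-1}=\1_W$ and $\Theta U_t\Theta^{-1}=U_t^*$, one gets $\Theta T_t\Theta^{-1}=\1_{W^c}+\1_W U_t^*\1_W=T_t^*$, so $T_t$ is $\Theta$-odd with no extra care. It is Fredholm because $[U_t,\1_W]$ is compact, with parametrix $\1_{W^c}+\1_W U_t^*\1_W$. It is norm-continuous because $\|U_t-U_s\|\le 2\pi|t-s|\,\|Y-X\|$ by Duhamel's formula. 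So the paper needs only compactness of each commutator and operator-norm continuity, at the price of invoking the cited Atiyah--Singer homotopy invariance.

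\textbf{Your route.} Going through \Cref{prop-2b} relies only on a result the paper actually proves, and it is quantitative. The cost is the uniform-in-$t$ trace-class bounds on $\AAA_t$ and the trace-norm Lipschitz estimate of your Step 3. These require tracking constants uniformly in $t$ in the composition and trace-class lemmas of \cite{DZ24}, and in the power-series bound for $h(X_t)-h(X_s)$. That bookkeeping is sound, but it is precisely the work the Fredholm homotopy argument avoids.
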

\begin{proof}
    Define $X(t) := X + t(Y-X)$. We have
\begin{align}
        X(t)^2-X(t)
        = t^2\br{Y-X}^2+t\br{(Y-X)X+X(Y-X)}+X^2-X-t(Y-X)\,.
    \end{align}
    As a result, for every $t$, $X(t)^2-X(t)$ decays away from $\partial \Omega$; hence $\II(e^{-2\pi i X(t)}, \1_W)$ is well-defined for $t\in [0,1]$ by \Cref{prop:geometric index well defined}; in particular, it is well-defined at $X(1)=Y$. By \Cref{prop-2a} , $\1_{W^c} + \1_W e^{-2\pi i X(t)} \1_W$ defines a homotopy of $\Theta$-odd Fredholm operators. The $\Z_2$ index is constant along such paths and hence indices at $t=0$ and $t=1$ agree. This precisely proves \eqref{eq-1c}.
\end{proof}

This leads to another equivalent definition of $\II_{\Omega, W}(H)$:

\begin{corollary}\label{cor-3d}
    Let $\Omega$, $W$ be transverse. Recall $U_\Omega = e^{-2\pi i \1_{\Omega} \Pi \1_\Omega}$. Let $V_{\Omega}= e^{-2\pi i \Pi \1_\Omega \Pi}$. Then
    \begin{align}
        \label{eq-3c}
    \II_{\Omega, W}(H):= \II(U_\Omega, \1_W) = \II(V_\Omega, \1_W).
    \end{align}
\end{corollary}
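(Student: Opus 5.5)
The plan is to apply \Cref{lemma-3c} with $X = \1_\Omega \Pi \1_\Omega$ and $Y = \Pi \1_\Omega \Pi$. Both are bounded and self-adjoint. Both are $\Theta$-invariant, since $\Pi$ is a spectral projection of the $\Theta$-invariant $H$ (so $[\Theta,\Pi]=0$) and $[\Theta,\1_\Omega]=0$ because $\Theta$ commutes with the position operators. Both are PSR, because $\Pi$ is PSR by \cite[Lemma 2.1]{DZ24} and multiplying by $\1_\Omega$ preserves the kernel bound. The proof of \Cref{cor-indices-well-defined} already gives the remaining hypothesis on $X$: $X^2 - X = \1_\Omega[\Pi,\1_\Omega]\1_\Omega$ decays polynomially away from $\p\Omega$. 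So it only remains to show that $X - Y$ decays polynomially away from $\p\Omega$. \Cref{lemma-3c} then yields $\II(e^{-2\pi i Y},\1_W) = \II(e^{-2\pi i X},\1_W)$, which is \eqref{eq-3c}.

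For the key step, I write $\Pi = \1_\Omega\Pi\1_\Omega + \1_\Omega \Pi \1_{\Omega^c} + \1_{\Omega^c}\Pi\1_\Omega + \1_{\Omega^c}\Pi\1_{\Omega^c}$ and expand $Y = \Pi\1_\Omega\Pi$. Alternatively, I use the commutator identity
\[
\Pi \1_\Omega \Pi - \1_\Omega \Pi \1_\Omega = [\Pi,\1_\Omega]\1_\Omega\Pi + \1_\Omega\Pi[\1_\Omega,\Pi] + \1_\Omega\Pi\1_\Omega(\Pi - \1)\ \text{-type terms},
\]
and more cleanly
\[
Y - X = [\Pi,\1_\Omega]\,\1_\Omega \Pi + \1_\Omega\big(\Pi \1_\Omega\Pi - \Pi\1_\Omega\big)\ldots
\]
The simplest route is this. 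Since $\Pi^2=\Pi$, we have $\1_\Omega\Pi\1_\Omega = \1_\Omega\Pi\Pi\1_\Omega$. Substituting $\1_\Omega\Pi = \Pi\1_\Omega - [\Pi,\1_\Omega]$ and $\Pi\1_\Omega = \1_\Omega\Pi + [\Pi,\1_\Omega]$ gives
\[
X = (\Pi\1_\Omega - C)(\1_\Omega\Pi + C), \qquad C := [\Pi,\1_\Omega],
\]
so that
\[
X - Y = \Pi\1_\Omega C - C\1_\Omega\Pi - C^2 .
\]
Every term on the right contains at least one factor of $C$. By \cite[Lemma 2.1]{DZ24}, $C$ decays polynomially away from $\p\Omega$.

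The remaining task, which is the only real estimate, is that a product of a PSR operator with an operator decaying polynomially away from $\p\Omega$ again decays polynomially away from $\p\Omega$. I would verify this directly from the kernel bounds. For $A$ PSR and $B$ decaying away from $S$, write $|(AB)(x,y)| \le \sum_z C_N(1+|x-z|)^{-N} C_M(1+|z-y|+d(z,S)+d(y,S))^{-M}$. Then use $d(x,S) \le |x-z| + d(z,S)$ and $|x-y|\le|x-z|+|z-y|$, together with Peetre-type inequalities, to get $(1+|x-y|+d(x,S)+d(y,S))^{-N'}$ with summable leftover powers in $z$ (summable since $\Z^2$ is two-dimensional). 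This is also essentially \cite[Corollary 2.4]{DZ24}-type bookkeeping, so I expect to cite it rather than redo it. I expect no genuine obstacle. The main care point is confirming that $Y$ is PSR and $\Theta$-invariant, so that \Cref{lemma-3c} applies symmetrically; the transversality hypothesis is only used through \Cref{lemma-3c} itself.
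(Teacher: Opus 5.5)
Your proposal is correct and follows the paper's proof. The paper applies \Cref{lemma-3c} with $X=\1_\Omega\Pi\1_\Omega$ and $Y=\Pi\1_\Omega\Pi$, and uses the shorter identity $X-Y=\1_\Omega C - C\Pi$ with $C=[\Pi,\1_\Omega]$; your identity $X-Y=\Pi\1_\Omega C - C\1_\Omega\Pi - C^2$ works equally well, though you should delete the unfinished exploratory displays that precede it.
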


\begin{proof}
    Note that
    \[
     \1_\Omega \Pi \1_\Omega  - \Pi \1_\Omega \Pi = \1_\Omega[\Pi, \1_\Omega] - [\Pi, \1_\Omega] \Pi.
    \]
    By \cite[Lemma 2.1]{DZ24}, $[\Pi, \1_\Omega]$ decays polynomially away from $\p \Omega$. By \Cref{lemma-3c} applying to $X = \1_\Omega \Pi \1_\Omega$ and $Y = \Pi \1_\Omega \Pi$, we obtain \eqref{eq-3c}.
\end{proof}

By \Cref{cor-3d} and \Cref{lemma-2a}, we have four equivalent definitions and their variations for the bulk index:
\begin{align}
    \II_{\Omega, W}(H) = \II(U_\Omega, \1_W) &\equiv \dim\ker(\1_{W^c} + \1_WU_\Omega \1_W) = \dim\ker(\1 - \1_W(\1 - U_\Omega)\1_W) \label{eqdef-3a}\\
    &\equiv \dim\ker(\1_{W^c} + U_\Omega^*\1_W U_\Omega) = \dim\ker(\1 - U_\Omega^*[U_\Omega, \1_W]) \label{eqdef-3b}\\
    =  \II(V_\Omega, \1_W) &\equiv \dim\ker(\1_{W^c} + \1_W V_\Omega \1_W) \label{eqdef-3c}\\
    &\equiv \dim\ker(\1_{W^c} + V_\Omega^*\1_W V_\Omega),\label{eqdef-3d}
\end{align}
where $U_\Omega = e^{-2\pi i \1_{\Omega} \Pi \1_\Omega}$, $V_{\Omega}= e^{-2\pi i \Pi \1_\Omega \Pi}$. As we discussed in \Cref{rmk-2a},  \Cref{eqdef-3a}, \Cref{eqdef-3c} are Fredholm, hence invariant under $\Theta$-odd compact perturbations; \Cref{eqdef-3b}, \Cref{eqdef-3d} are involved with local operators.
\subsection{Complement}\label{subsec-3.3}

\begin{lemma}\label{lemma-3e} Let $\Omega$, $W$ be transverse. We have
    \begin{equation}
        \label{eq-4a}
     \II_{\Omega, W}(H) \equiv \II_{\Omega^c,W}(H) \equiv \II_{\Omega,W^c}(H).
    \end{equation}
\end{lemma}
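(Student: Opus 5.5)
The plan is to treat the two equivalences separately. Complementing $W$ uses the analytic form \eqref{eqdef-3b}. Complementing $\Omega$ uses the alternative form \eqref{eqdef-3c} from \Cref{cor-3d}. In both cases the conclusion comes from the same fact: a $\Theta$-odd Fredholm operator has Fredholm index zero (\Cref{rmk-2a}), so its kernel and cokernel have equal dimension. Note that $\p\Omega^c=\p\Omega$ and $\p W^c=\p W$. Hence $(\Omega^c,W)$ and $(\Omega,W^c)$ are again transverse, and all three indices are well-defined by \Cref{cor-indices-well-defined}.

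\textbf{Complement in $\Omega$.} I will work with $V_\Omega=e^{-2\pi i\Pi\1_\Omega\Pi}$. Since $\Pi\1_{\Omega^c}\Pi=\Pi-\Pi\1_\Omega\Pi$ and $\Pi$ commutes with $\Pi\1_\Omega\Pi$, we get
\[
V_{\Omega^c}=e^{-2\pi i\Pi}\,e^{2\pi i\Pi\1_\Omega\Pi}=V_\Omega^*,
\]
because $e^{-2\pi i\Pi}=\1$ for a projection $\Pi$. Therefore
\[
\1_{W^c}+\1_WV_{\Omega^c}\1_W=\big(\1_{W^c}+\1_WV_\Omega\1_W\big)^*.
\]
The operator $T=\1_{W^c}+\1_WV_\Omega\1_W$ is Fredholm and $\Theta$-odd, so $\operatorname{ind}T=0$, which gives $\dim\ker T^*=\dim\ker T$. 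Combining this with \Cref{cor-3d} yields $\II_{\Omega^c,W}(H)=\II_{\Omega,W}(H)$.

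\textbf{Complement in $W$.} Write $\AAA_W=\1_W-U^*\1_WU$ with $U=U_\Omega$. Then $\AAA_{W^c}=(\1-\1_W)-U^*(\1-\1_W)U=-\AAA_W$. By \eqref{eqdef-3b}, $\II_{\Omega,W^c}$ is the parity of $m_{-1}(\AAA_W)$, whereas $\II_{\Omega,W}$ is the parity of $m_1(\AAA_W)$. It therefore suffices to show $m_1(\AAA_W)=m_{-1}(\AAA_W)$. This follows the Avron--Seiler--Simon argument \cite{ASS94}, with $P=\1_W$:
\begin{itemize}
\item $\ker(\1-\AAA_W)=\{\psi\in\Range P:\ PUP\psi=0\}$, which is the kernel of $T=P^\perp+PUP$.
\item $\ker(\1+\AAA_W)=\{\psi\in\Range P^\perp:\ U\psi\in\Range P\}$. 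The map $\psi\mapsto U\psi$ is an isomorphism from this space onto $\{\phi\in\Range P:\ PU^*P\phi=0\}$, which is the kernel of $T^*=P^\perp+PU^*P$.
\end{itemize}
To check the isomorphism, take $\phi=U\psi$. Then $U^*\phi=\psi\in\Range P^\perp$, so $PU^*P\phi=PU^*\phi=0$. Conversely, if $\phi\in\Range P$ with $PU^*\phi=0$, set $\psi=U^*\phi$; then $\psi\in\Range P^\perp$ and $U\psi=\phi\in\Range P$. Hence $m_1-m_{-1}=\dim\ker T-\dim\ker T^*=\operatorname{ind}T$. This index vanishes because $T$ is $\Theta$-odd Fredholm. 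Thus $m_1(\AAA_W)=m_{-1}(\AAA_W)$, and $\II_{\Omega,W^c}(H)\equiv\II_{\Omega,W}(H)$.

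\textbf{Expected main obstacle.} The only delicate point is the index-zero claim. It is asserted in \Cref{rmk-2a}, and I would justify it directly. Set $S=\Theta^{-1}$. From $\Theta T\Theta^{-1}=T^*$ we get $T^*=\Theta TS$, so $\ker T^*=\Theta(\ker T)$. Since $\Theta$ is an antiunitary bijection, this forces $\dim\ker T^*=\dim\ker T$, and so $\operatorname{ind}T=0$. The remaining verifications are routine: that $V_\Omega^*$ and $U$ remain $\Theta$-odd, and that $\1_{W^c}$ is $\Theta$-invariant.
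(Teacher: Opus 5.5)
Your proof is correct and is essentially the paper's argument. For the complement in $\Omega$, the paper also passes to $V_\Omega$ and uses $V_{\Omega^c}=V_\Omega^*=\Theta V_\Omega\Theta^*$, so that $\1_{W^c}+\1_W V_{\Omega^c}\1_W$ is the $\Theta$-conjugate of $\1_{W^c}+\1_W V_\Omega\1_W$; your derivation of this identity is cleaner than the paper's displayed chain, which contains a sign slip. For the complement in $W$, your route composes three identifications: $\AAA_{W^c}=-\AAA_W$, the Avron--Seiler--Simon identification $\ker(\1+\AAA_W)=U_\Omega^*\ker T^*$, and $\ker T^*=\Theta\ker T$. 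Together these give the antiunitary $U_\Omega^*\Theta=\Theta U_\Omega$ from $\ker(\1-\AAA_W)$ onto $\ker(\1+\AAA_W)$, and this is exactly the map by which the paper conjugates $\1_{W^c}+U_\Omega^*\1_WU_\Omega$ into $\1_W+U_\Omega^*\1_{W^c}U_\Omega$.
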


\begin{proof} \textbf{1. } We first show $ \II_{\Omega,W}(H) \equiv \II_{\Omega, W^c}(H)$. Recall $\1_W \Theta = \Theta \1_W$, $\Theta U_\Omega = U_\Omega^* \Theta$, $(\Theta U_\Omega)^* = U_\Omega^* \Theta^*$. Therefore,
\[
\Theta U_\Omega(\1_{W^c} + U_\Omega^* \1_W U_\Omega) (\Theta U_\Omega)^* = \Theta (U_\Omega \1_{W^c} U_\Omega^* + \1_W) \Theta^* = \1_W + U_\Omega^* \1_{W^c} U_\Omega.
\]
As a result, by definition \eqref{eqdef-3b},
\[
\II_{\Omega, W}(H) \equiv \dim \ker(\1_{W^c} + U_\Omega^* \1_W U_\Omega) = \dim \ker(\1_W + U_\Omega^* \1_{W^c} U_\Omega) \equiv \II_{\Omega, W^c}(H).
\]

\textbf{2.} We now show $\II_{\Omega,W}(H) = \II_{\Omega^c,W}(H)$ using definition \eqref{eqdef-3c}. Note
\begin{align}
    V_{\Omega^c} = e^{2\pi i\Pi \1_{\Omega^c} \Pi} = e^{2\pi i \Pi - 2\pi i \Pi \1_\Omega \Pi} = e^{2\pi i \Pi}e^{-2\pi i \Pi \1_\Omega\Pi} = e^{-2\pi i \Pi \1_{\Omega} \Pi} = \Theta e^{2\pi i \Pi \1_\Omega \Pi} \Theta^* = \Theta V_\Omega \Theta^*.
\end{align}
It follows that $\Theta (\1_{W^c} + \1_W V_{\Omega} \1_W) \Theta^* = \1_{W^c} + \1_W V_{\Omega^c} \1_W$. Therefore, by definition \eqref{eqdef-3c}:
\[
\II_{\Omega, W}(H) \equiv \dim\ker( \1_{W^c} + \1_W V_{\Omega} \1_W) = \dim \ker(\1_{W^c} + \1_W V_{\Omega^c} \1_W) \equiv \II_{\Omega^c, W}(H).
\]
This completes the proof. \end{proof}

\subsection{Additivity}
\begin{lemma}[Additivity in $\Omega$]\label{lemma-3f} Let $\Omega_1, \Omega_2, W \subset \R^2$ be such that $(\Omega_1,W)$, $(\Omega_2,W)$ are transverse, and $\Omega_1\cap  \Omega_2 = \emptyset$. Assume that $\Pi_1, \Pi_2$ are two PSR, $\Theta$-invariant projectors. Then
\[
\II(U_1U_2, \1_W) = \II(U_1, \1_W) + \II(U_2, \1_W)
\]
where $U_j = e^{-2\pi i \1_{\Omega_j}\Pi_j \1_{\Omega_j}}$.

In particular, when $\Pi_1 = \Pi_2 = \Pi$, we have
\[
\II(U_{\Omega_1 \sqcup  \Omega_2}, \1_W) = \II(U_{\Omega_1}, \1_W) + \II(U_{\Omega_2}, \1_W).
\]
\end{lemma}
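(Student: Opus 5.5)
Our approach is the standard product formula for Fredholm indices, adapted to the parity of kernels. Write $P=\1_W$, $P^\perp=\1_{W^c}$, and let $T(U):=P^\perp+PUP$. The key identity is
\[
T(U_1U_2)-T(U_1)T(U_2) = PU_1U_2P - PU_1PU_2P = PU_1P^\perp U_2P = -P[U_1,P]\,P^\perp U_2P .
\]
The right-hand side is compact, even trace-class, because $[U_1,\1_W]$ is trace-class by \Cref{prop:geometric index well defined}; this applies since $X=\1_{\Omega_1}\Pi_1\1_{\Omega_1}$ satisfies $X^2-X=\1_{\Omega_1}[\Pi_1,\1_{\Omega_1}]\1_{\Omega_1}$, which decays away from $\p\Omega_1$, exactly as in \Cref{cor-indices-well-defined}.

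We also need $(U_1U_2,\1_W)$ to be $\Theta$-admissible. Since $\Omega_1\cap\Omega_2=\emptyset$, the operators $\1_{\Omega_1}$ and $\1_{\Omega_2}$ are orthogonal, so $X_1X_2=X_2X_1=0$ for $X_j=\1_{\Omega_j}\Pi_j\1_{\Omega_j}$. Hence $U_1U_2=U_2U_1=e^{-2\pi i(X_1+X_2)}$, and therefore $\Theta U_1U_2\Theta^{-1}=U_1^*U_2^*=(U_1U_2)^*$. In addition, $[U_1U_2,P]=U_1[U_2,P]+[U_1,P]U_2$ is trace-class. Consequently $T(U_1U_2)$ is $\Theta$-odd Fredholm.

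\textbf{Homotopy step.} The obstacle is that $T(U_1)T(U_2)$ need not be $\Theta$-odd: its $\Theta$-conjugate is $T(U_1)^*T(U_2)^*$ in the wrong order. Commutativity comes to the rescue. We interpolate via $U_j(t)=e^{-2\pi i tX_j}$, or better, we use the linear path
\[
S_t = T(U_1U_2) + t\,P U_1 P^\perp U_2 P \cdot(-1),\qquad t\in[0,1],
\]
which connects $T(U_1U_2)$ to $T(U_1)T(U_2)$ through compact perturbations. To keep everything $\Theta$-odd, we symmetrize instead: since $U_1,U_2$ commute, $\tfrac12(T(U_1)T(U_2)+T(U_2)T(U_1))$ differs from $T(U_1U_2)$ by a compact and is $\Theta$-odd. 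We then argue directly that $\dim\ker$ mod $2$ is additive. Consider the block operator on $\mathcal H\oplus\mathcal H$
\[
\begin{pmatrix} T(U_1) & 0\\ 0 & T(U_2)\end{pmatrix}\ \longrightarrow\ \begin{pmatrix} T(U_1U_2) & 0\\ 0 & 1\end{pmatrix},
\]
using the rotation homotopy $R_\theta=\begin{pmatrix}\cos\theta & -\sin\theta\\ \sin\theta&\cos\theta\end{pmatrix}$, together with the path
\[
\mathrm{diag}(T(U_1),1)\,R_\theta\,\mathrm{diag}(T(U_2),1)\,R_\theta^{*},\qquad \theta\in[0,\pi/2].
\]
This is the usual Whitehead trick. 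The extended time reversal is $\Theta\oplus\Theta$. We must check $\Theta$-oddness along the path, which we would achieve by the $\Theta$-odd version $\mathrm{diag}(U_1,1)R_\theta\,\mathrm{diag}(U_2,1)R_\theta^*$ compressed by $P\oplus P$ plus $P^\perp\oplus P^\perp$. Commutativity of $U_1$ and $U_2$ is what makes the endpoints and the oddness match.

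\textbf{Conclusion.} \Cref{prop-2a} gives invariance of $\dim\ker$ mod $2$ along the path. At $\theta=0$ the kernel parity is $\II(U_1,P)+\II(U_2,P)$, since $\ker$ of a direct sum adds. At $\theta=\pi/2$, after a compact $\Theta$-odd correction to $T(U_1U_2)\oplus 1$ as in the identity above, the parity is $\II(U_1U_2,P)$. The special case $\Pi_1=\Pi_2$ follows from $U_{\Omega_1\sqcup\Omega_2}=e^{-2\pi i(X_1+X_2)}=U_1U_2$, since the cross terms $\1_{\Omega_1}\Pi\1_{\Omega_2}$ do not appear in $\1_{\Omega_1\sqcup\Omega_2}\Pi\1_{\Omega_1\sqcup\Omega_2}$ only up to a term decaying away from $\p\Omega_1\cup\p\Omega_2$; \Cref{lemma-3c} absorbs this. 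We expect the main obstacle to be verifying that each operator along the rotation path is $\Theta$-odd and Fredholm.
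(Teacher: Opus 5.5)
Your proposal has a genuine gap exactly at the step you flag as the main obstacle: $\Theta$-oddness along the rotation path. The reason you give, commutativity of $U_1$ and $U_2$, is not enough. Set $\tilde\Theta=\Theta\oplus\Theta$, which commutes with the real matrix $R_\theta$, and $\mathcal U_\theta=\mathrm{diag}(U_1,\1)R_\theta\,\mathrm{diag}(U_2,\1)R_\theta^*$. Then $\tilde\Theta\,\mathcal U_\theta\,\tilde\Theta^{-1}=\mathrm{diag}(U_1^*,\1)\,M_\theta$, while $\mathcal U_\theta^*=M_\theta\,\mathrm{diag}(U_1^*,\1)$. Here $M_\theta=R_\theta\,\mathrm{diag}(U_2^*,\1)R_\theta^*$ has off-diagonal entries $\cos\theta\sin\theta\,(U_2^*-\1)$. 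For $0<\theta<\pi/2$ the two agree only if, besides $[U_1,U_2]=0$, you also have $(U_1^*-\1)(U_2^*-\1)=0=(U_2^*-\1)(U_1^*-\1)$. This fails for general commuting $\Theta$-odd unitaries. For example, $U_1=U_2=U\neq\1$ would force the normal operator $U^*-\1$ to be nilpotent, hence zero. So, as justified, \Cref{prop-2a} cannot be applied along your path. The same blind spot appears earlier: you treat $\1_WU_1\1_{W^c}U_2\1_W$ as a compact error that spoils $\Theta$-oddness of $T(U_1)T(U_2)$, and you reach for a linear path and a symmetrization. Neither is needed. A minor slip: $\theta=0$ gives $\mathrm{diag}(U_1U_2,\1)$ and $\theta=\pi/2$ gives $\mathrm{diag}(U_1,U_2)$, so your endpoints are swapped.

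The missing idea is the identity the paper's proof rests on. Since $\1-U_j=\1_{\Omega_j}\QQ_j\1_{\Omega_j}$ and $\Omega_1\cap\Omega_2=\emptyset$, you get $(\1-U_1)(\1-U_2)=(\1-U_2)(\1-U_1)=0$, i.e. $U_1U_2=U_1+U_2-\1$. This rescues your route: the conditions above hold, and the compressed path is Fredholm because $R_\theta$ commutes with $\1_W\oplus\1_W$. It also makes the homotopy unnecessary. Because $\1_W$ commutes with $\1_{\Omega_j}$, your error term $\1_W(U_1-\1)\1_{W^c}(U_2-\1)\1_W$ is exactly zero. Hence $T(U_1U_2)=T(U_1)T(U_2)=T(U_1)+T(U_2)-\1$ holds exactly. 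The paper then splits kernels directly. Any $v\in\ker T(U_j)$ satisfies $v=\1_W(\1-U_j)v$, so it is supported in $\Omega_j$. Any $v\in\ker T(U_1U_2)$ satisfies $v=\1_W(\1-U_1)v+\1_W(\1-U_2)v$, and then $\1_{\Omega_j}v\in\ker T(U_j)$. Thus $\ker T(U_1U_2)=\ker T(U_1)\oplus\ker T(U_2)$, an exact equality of dimensions, not just of parities.

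On the special case $\Pi_1=\Pi_2$, your idea is right and addresses a step the paper leaves implicit. Note, however, that $U_{\Omega_1\sqcup\Omega_2}\neq e^{-2\pi i(X_1+X_2)}$ in general, because the cross terms $\1_{\Omega_1}\Pi\1_{\Omega_2}+\1_{\Omega_2}\Pi\1_{\Omega_1}$ are present. You therefore need \Cref{lemma-3c} in a variant with decay away from $\p\Omega_1\cup\p\Omega_2$. That variant holds by the same proof, since both pairs $(\Omega_j,W)$ are transverse.
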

\begin{proof}
To prove the lemma, it suffices to justify that $\EE = \EE_1\oplus \EE_2$, where
\begin{equation}
    \EE_j \de \ker \big( \1_{W^c} + \1_W U_j \1_W) \qquad \EE \de \ker \big( \1_{W^c} + \1_W U_1 U_2 \1_W \big).
\end{equation}
By the definition of a direct sum, it remains to show: 1. $\EE_1\cap \EE_2 = \{0\}$; 2. every $v\in \EE$ can be written as $v_1 + v_2$, where $v_j\in \EE_j$.

\textbf{0.} First note
\begin{align}\label{eq-2a}
    \1 - U_j = \1 - e^{-2\pi i \1_{\Omega_j} \Pi_j \1_{\Omega_j}} = \sum_{k = 1}^\infty (-2\pi i \1_{\Omega_j}\Pi_j \1_{\Omega_j})^k =: \1_{\Omega_j} \QQ_j\1_{\Omega_j}
\end{align}
for some operator $\QQ_j$. In particular, since $\Omega_1 \cap \Omega_2 = \emptyset$, we have
\begin{align}
    \label{eq-2d}
\1 - U_1 - U_2 + U_1U_2 = (1 - U_1)(1 - U_2) = \1_{\Omega_1} \QQ_1 \1_{\Omega_1}\1_{\Omega_2} \QQ_2 \1_{\Omega_2}  = 0
\end{align}

\textbf{1.} We show $\EE_1 \cap \EE_2 = \{0\}$ here. Note that
\begin{equation}\label{eq-1e}
    v \in \EE_j \ \Leftrightarrow \ \systeme{ \1_{W^c} v = 0 \\ \1_W U_j \1_W v = 0} \ \Leftrightarrow \ \systeme{ \1_{W^c} v = 0 \\ \1_W U_j v = 0} \ \Leftrightarrow \ \systeme{ \1_{W^c} v = 0 \\ \1_W (\1-U_j) v =\1_W\1_{\Omega_j} \QQ_j\1_{\Omega_j}v = v}.
\end{equation}
The last two equations imply that $v$ is supported in $W$ and $\Omega_j$. In particular, if $v\in \EE_1\cap \EE_2$, then $v$ is supported in $\Omega_1\cap \Omega_2 = \emptyset$; hence $v = 0$. Thus $\EE_1 \cap \EE_2 = \{0\}$.

\textbf{2.} Meanwhile if $v\in \EE$, we claim $v$ is supported on $\Omega_1\sqcup \Omega_2$. Indeed, by \eqref{eq-2d},
\[
\begin{aligned}
v\in \EE
&\Leftrightarrow \systeme{\1_{W^c}v = 0\\ \1_W U_1U_2 v = 0} \\
&\Leftrightarrow \systeme{\1_{W^c}v = 0 \\ \1_W(U_1 + U_2 - \1)v = 0} \\
&\Leftrightarrow \systeme{\1_{W^c} v = 0\\ \1_W(\1 - U_1)v + \1_W(\1 - U_2)v = v}.
\end{aligned}
\]
Since $\1 - U_j = \1_{\Omega_j}\QQ_j\1_{\Omega_j}$, the last two equations imply that $v$ is supported on $W$ and $\Omega_1 \sqcup \Omega_2$.

As a result, we can write $v = \1_{\Omega_1}v + \1_{\Omega_2}v$. Now we claim $v_j := \1_{\Omega_j}v\in \EE_j$. Indeed, apply $\1_{\Omega_1}$ to the last two equations above, since $\1_{\Omega_1} (\1 - U_2) = 0$ by \eqref{eq-2a}, we obtain
\[
    \1_{\Omega_1}\1_{W^c} v = \1_{W^c}\1_{\Omega_1}v = \1_{W^c} v_1 = 0
\]
    and
\[
  v_1 = \1_{\Omega_1}v = \1_{\Omega_1} \1_W(\1 - U_1) v =  \1_W\1_{\Omega_1} (1 - U_1)v = \1_W (\1 - U_1) \1_{\Omega_1} v = \1_W (\1 - U_1) v_1
\]
where we used $\1_{\Omega_j}(\1 - U_j) = (\1 - U_j) \1_{\Omega_j}$ by \eqref{eq-2a}. This implies $\1_W U_1 v_1 = 0$. Together with $\1_{W^c}v_1 = 0$, we obtain $v_1 \in \EE_1$. A similar argument proves $v_2\in \EE_2$.

This completes the proof.
\end{proof}

\begin{lemma}[Additivity in $W$]\label{lemma-3g}  Let $W_1, W_2, \Omega$ be open subsets of $\R^2$ such that $(\Omega, W_1)$ and $(\Omega,W_2)$ are transverse, and $W_1\cap W_2 = \emptyset$. Then $(\Omega, W_1\sqcup W_2)$ is transverse and
\begin{equation}\label{eq-3d}
\II(U_\Omega, \1_{W_1 \sqcup W_2}) \equiv \II(U_\Omega, \1_{W_1}) + \II(U_\Omega, \1_{W_2}).
\end{equation}
\end{lemma}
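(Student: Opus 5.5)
We first check transversality of $(\Omega, W_1\sqcup W_2)$. Since $W_1, W_2$ are disjoint open sets, $\p(W_1\sqcup W_2)\subset \p W_1\cup \p W_2$. Hence for every $x$,
\[
d\big(x,\p(W_1\sqcup W_2)\big)\geq \min\big(d(x,\p W_1),d(x,\p W_2)\big).
\]
It follows that $\Psi_{\Omega,W_1\sqcup W_2}(x)\geq \min\big(\Psi_{\Omega,W_1}(x),\Psi_{\Omega,W_2}(x)\big)$, and the $\liminf$ in \eqref{eq:transversal sets} is bounded below by the minimum of the two positive $\liminf$'s. In particular, by \Cref{prop:geometric index well defined} (or \Cref{cor-indices-well-defined}) all three indices in \eqref{eq-3d} are well-defined.

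For the index identity we use the analytic definition \eqref{eqdef-UPU}. Write $W=W_1\sqcup W_2$, so that $\1_W=\1_{W_1}+\1_{W_2}$ and
\[
\AAA_{\Omega,W}=U_\Omega^*[U_\Omega,\1_W]=\AAA_{\Omega,W_1}+\AAA_{\Omega,W_2}.
\]
Consider the auxiliary operator $\AAA':=\1_{W_1}\AAA_{\Omega,W_1}\1_{W_1}+\1_{W_2}\AAA_{\Omega,W_2}\1_{W_2}$ plus the analogous pieces on the complement. A cleaner way to organise this is the following. Both $\AAA_{\Omega,W_1}$ and $\AAA_{\Omega,W_2}$ decay polynomially away from $\p\Omega$ and from $\p W_1$, respectively $\p W_2$ (\Cref{rmk-3a}). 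Transversality then makes them effectively supported near the (possibly unbounded but polynomially thin) sets where $\p\Omega$ meets $\p W_j$. We therefore want to show that $\AAA_{\Omega,W}$ is, up to an admissible perturbation, the orthogonal direct sum of two admissible operators that live on disjoint regions.

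The cleanest route avoids norm estimates and uses the Fredholm form \eqref{eqdef-PUP} together with \Cref{prop-2a}. Set
\[
T=\1_{W^c}+\1_W U_\Omega \1_W .
\]
Its off-diagonal blocks $\1_{W_1}U_\Omega\1_{W_2}=\1_{W_1}[U_\Omega,\1_{W_2}]$ and $\1_{W_2}U_\Omega\1_{W_1}$ are compact, since $[U_\Omega,\1_{W_2}]$ is trace-class by \Cref{prop:geometric index well defined}. Consider the path
\[
T_t=\1_{W^c}+\1_{W_1}U_\Omega\1_{W_1}+\1_{W_2}U_\Omega\1_{W_2}+t\big(\1_{W_1}U_\Omega\1_{W_2}+\1_{W_2}U_\Omega\1_{W_1}\big),\qquad t\in[0,1].
\]
Each $T_t$ is a compact perturbation of $T$, hence Fredholm. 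Each $T_t$ is also $\Theta$-odd in the sense of \Cref{prop-2a}, because $\Theta$ commutes with the indicators and $\Theta U_\Omega\Theta^{-1}=U_\Omega^*$. So $\dim\ker T_0\equiv\dim\ker T_1=\dim\ker T$.

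It remains to compute $\ker T_0$. The operator $T_0$ is block diagonal with respect to $\ell^2=\Range\1_{W_1}\oplus\Range\1_{W_2}\oplus\Range\1_{W^c}$, with blocks $\1_{W_1}U_\Omega\1_{W_1}$, $\1_{W_2}U_\Omega\1_{W_2}$ and the identity. Now $\ker(\1_{W_1^c}+\1_{W_1}U_\Omega\1_{W_1})$ equals the kernel of the first block viewed inside $\Range\1_{W_1}$, and similarly for $W_2$. Hence
\[
\ker T_0=\ker\big(\bbL_{W_1}U_\Omega\big)\oplus\ker\big(\bbL_{W_2}U_\Omega\big),
\]
and taking dimensions mod $2$ gives \eqref{eq-3d}.

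The main point to verify carefully is that the homotopy $T_t$ stays within the class to which \Cref{prop-2a} applies, namely that ``$\Theta$-odd'' is preserved along the path. This holds because $T_t^*=\Theta T_t\Theta^{-1}$ is linear in $t$ with real coefficients and holds termwise. The Fredholm property along the path is automatic from compactness of the off-diagonal terms.
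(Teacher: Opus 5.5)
Your proof is correct and follows the paper's route: you drop the compact off-diagonal blocks $\1_{W_1}U_\Omega\1_{W_2}+\1_{W_2}U_\Omega\1_{W_1}$ via \Cref{prop-2a}, then split the kernel of the resulting block-diagonal operator as $\ker(\bbL_{W_1}U_\Omega)\oplus\ker(\bbL_{W_2}U_\Omega)$. Your explicit homotopy correctly checks that only the symmetric sum of the two cross terms is $\Theta$-odd, which is slightly more careful than the paper. The abandoned paragraph about the auxiliary operator $\AAA'$ plays no role and should be deleted.
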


\begin{proof}
\textbf{0.} Define $W = W_1 \sqcup W_2$. We first show $(\Omega, W)$ is transverse. Note that $\p W = \p W_1 \cup \p W_2$ if $W_1$, $W_2$ are disjoint open sets; therefore
\begin{align}
 \Psi_{\Omega, W}(x) & = 1 + d(x, \p(W_1 \sqcup W_2)) + d(x, \p \Omega)\\
 & =  1 + \min\big\{d(x, \p W_1), d(x, \p W_2)\big\} + d(x, \p \Omega) = \min \big\{\Psi_{\Omega, W_1}(x), \Psi_{\Omega, W_2}(x)\big\}.
\end{align}
Hence $(\Omega, W)$ is transverse.

\textbf{1.} Recall by \Cref{eqdef-3a} and \Cref{eq-2a},
\begin{align}
        \II(U_\Omega, \1_W) &\equiv \dim\ker(\1 - \1_W(1 - U_\Omega)\1_W)\\
        &\equiv \dim\ker(\1 - \1_{W_1}(\1 - U_\Omega) \1_{W_1} - \1_{W_2} (\1 - U_\Omega)\1_{W_2} - \1_{W_1}U_\Omega \1_{W_2} - \1_{W_2} U_\Omega\1_{W_1}).
  \end{align}
Note that $\1_{W_1} U_\Omega \1_{W_2} = \1_{W_1}[U_{\Omega}, \1_{W_2}]$ where $[U_\Omega, \1_{W_2}]$ is $\Theta$-odd and compact. By \Cref{prop-2a},  the index does not change by these perturbations. Hence
\[
\II(U_\Omega, \1_W) \equiv \dim\ker(\1 - \1_{W_1}(1 - U_\Omega)\1_{W_1} - \1_{W_2}(1 - U_\Omega)\1_{W_2} ).
\]

\textbf{2.} Now to prove \Cref{eq-3d}, it remains to show  $\mathcal W = \mathcal W_1 \oplus \mathcal W_2$ where
\[
\mathcal W_j \de \ker(\1 - \1_{W_j}(1 - U_\Omega)\1_{W_j}),  \qquad \mathcal W \de \ker(\1 - \1_{W_1}(1 - U_\Omega)\1_{W_1} - \1_{W_2}(1 - U_\Omega)\1_{W_2} ).\]
That is to prove: 1. $\mathcal W_1\cap \mathcal W_2 = \{0\}$; 2. every $w\in \mathcal W$ can be written as $w_1 + w_2$ for $w_j\in \mathcal W_j$.

\textbf{3.} Now we show $\mathcal W_1 \cap \mathcal W_2 = \{0\}$. Recall by \Cref{eqdef-3a}, $w_j\in \mathcal W_j = \ker(\1 - \1_{W_j} (1 - U_\Omega) \1_{W_j})$.  Hence $w_j = \1_{W_j}(1 - U_\Omega) \1_{W_j}w_j$. This implies $w_j$ is supported on $W_j $. Assume $w\in \mathcal W_1 \cap \mathcal W_2$, then $w$ is supported on $W_1 \cap W_2 = \emptyset$; hence $w = 0$.

\textbf{4.} Now given $w\in \mathcal W$, by definition of $\mathcal W$,
\[
w = \1_{W_1}(1 - U_\Omega)\1_{W_1}w + \1_{W_2}(1 - U_\Omega)\1_{W_2} w =: w_1 + w_2.
\]
By the definition of $w_1$, $w_2$ -- $w_j = \1_{W_j}(1 - U_\Omega) \1_{W_j}$ -- they belong to $\mathcal W_1$, $\mathcal W_2$ respectively.
    This completes the proof. \end{proof}
\subsection{Compact stability} \label{subsec-3.5}

\begin{lemma}\label{lemma-3h}
Let $\Omega$, $W$ be transverse, $C\subset \R^2$ be a compact set. We have
\[
\II_{\Omega, W}(H) = \II_{\Omega, W\cup C}(H) = \II_{\Omega\cup C, W}(H).
\]
\end{lemma}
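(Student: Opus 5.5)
The approach is to show that adding a compact set $C$ to either $W$ or $\Omega$ changes the relevant Fredholm operator only by a $\Theta$-odd compact perturbation, and then invoke \Cref{prop-2a}. A preliminary point is that the pairs $(\Omega,W\cup C)$ and $(\Omega\cup C,W)$ are still transverse. Since $C$ is compact, $\p(W\cup C)$ and $\p W$ coincide outside a bounded set, and likewise for $\Omega$. Hence $\Psi$ is unchanged for $\|x\|$ large, and the $\liminf$ in \Cref{def:trans sets} is unaffected. So all three indices are well defined by \Cref{cor-indices-well-defined}.

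\textbf{Change in $W$.} We use the algebraic form \eqref{eqdef-3a}, $\II_{\Omega,W}(H)\equiv\dim\ker(\1_{W^c}+\1_WU_\Omega\1_W)$. Replacing $W$ by $W'=W\cup C$ changes $\1_W$ by $\1_{W'}-\1_W=\1_{C\setminus W}$. This is a finite-rank operator: $C$ contains finitely many lattice points and the fibers are $\C^d$. It commutes with $\Theta$, because $\Theta$ commutes with multiplication operators, so it is $\Theta$-invariant. The difference
\[
(\1_{W'^c}+\1_{W'}U_\Omega\1_{W'})-(\1_{W^c}+\1_WU_\Omega\1_W)
\]
is therefore a finite sum of products, each containing the factor $\1_{C\setminus W}$. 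It is finite-rank, hence compact.

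We then run the straight-line homotopy $T_t=(1-t)T_0+tT_1$. Each $T_t$ is Fredholm, since it is a compact perturbation of the Fredholm operator $T_0$. Each $T_t$ is also $\Theta$-odd. The conditions $\Theta\1_W\Theta^{-1}=\1_W$ and $\Theta U_\Omega\Theta^{-1}=U_\Omega^*$ show that both endpoints satisfy $\Theta T\Theta^{-1}=T^*$, and this relation is real-linear, so it passes to convex combinations with real coefficients. \Cref{prop-2a} then gives $\II_{\Omega,W}=\II_{\Omega,W\cup C}$.

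\textbf{Change in $\Omega$.} We use \Cref{lemma-3c} with $X=\1_\Omega\Pi\1_\Omega$ and $Y=\1_{\Omega\cup C}\Pi\1_{\Omega\cup C}$. The difference $X-Y$ is a sum of terms each carrying a factor $\1_{C\setminus\Omega}$ on the left or on the right. Since $C$ is compact and $\Pi$ is PSR, such terms have kernels decaying polynomially away from $C$. They therefore decay polynomially away from $\p\Omega$, in the sense of the definition with constants, because $d(x,\p\Omega)$ differs from $d(x,C)$ by at most a constant plus $|x-y|$-type terms on the support. The hypothesis on $X^2-X$ holds by the proof of \Cref{cor-indices-well-defined}. \Cref{lemma-3c} then yields $\II(U_\Omega,\1_W)=\II(U_{\Omega\cup C},\1_W)$.

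\textbf{Expected obstacle.} The main point needing care is this last decay claim. One must check that decay away from a compact set implies decay away from $\p\Omega$ in the precise form $C_N(|x-y|+d(x,\p\Omega)+d(y,\p\Omega)+1)^{-N}$.

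If this is troublesome, there is a simpler alternative. $X-Y$ is trace class, being a product of a finite-rank projection with a bounded operator. Hence $U_\Omega-U_{\Omega\cup C}$ is trace class, for instance via Duhamel's formula. The Fredholm operators in \eqref{eqdef-3a} then differ by a $\Theta$-odd compact operator. The same linear homotopy argument through \Cref{prop-2a} applies, after checking that the intermediate unitaries $e^{-2\pi iX(t)}$ still give $\Theta$-odd Fredholm operators.
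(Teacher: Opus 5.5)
Your proof is correct, but it argues differently from the paper. The paper never compares the operators directly. It invokes additivity in $\Omega$ and $W$ (Lemmas~\ref{lemma-3f} and~\ref{lemma-3g}) to reduce the claim to $\II_{\Omega,C}(H)\equiv\II_{C,W}(H)\equiv 0$. It then notes that each of these is the index of a $\Theta$-odd compact perturbation of $\1$, using $\1-U_C=\1_C\QQ_C\1_C$, so Proposition~\ref{prop-2a} gives zero.

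You instead compare the two Fredholm operators head-on:
\begin{itemize}
\item for $W$, the difference is finite rank because $\1_{W\cup C}-\1_W=\1_{C\setminus W}$ is finite rank;
\item for $\Omega$, you go through Lemma~\ref{lemma-3c}, or through the trace-class difference $U_\Omega-U_{\Omega\cup C}$.
\end{itemize}
The paper's route is shorter once the additivity machinery is in place, and it isolates the conceptual point that a bounded region carries trivial index. Yours is self-contained and does not need the additivity lemmas or their hypotheses. Those hypotheses matter: strictly, the paper's reduction should use the disjoint pieces $C\setminus W$ and $C\setminus\Omega$ rather than $C$, and Lemma~\ref{lemma-3g} as stated asks for open sets.

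Two small remarks. First, the obstacle you flag resolves in one line. Let $R=\sup_{x\in C}d(x,\p\Omega)$, which is finite when $\p\Omega\neq\emptyset$. For a kernel entry with $x\in C$ you have $d(y,\p\Omega)\le R+|x-y|$, hence
\[
(1+|x-y|)^{-N}\le (2R+2)^N\big(1+|x-y|+d(x,\p\Omega)+d(y,\p\Omega)\big)^{-N},
\]
and symmetrically when $y\in C$. Your trace-class alternative covers the degenerate case $\p\Omega=\emptyset$ and is the cleaner route anyway.

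Second, $\Psi$ is not literally unchanged for large $\|x\|$. Take $W=\{x_1>0\}$, $C$ the closed unit disk and $x=(-t,0)$: then $d(x,\p W)=t$ while $d(x,\p(W\cup C))=t-1$ for every $t>1$. What you actually need is $d(x,\p(W\cup C))\ge\min\{d(x,\p W),d(x,C)\}$ together with $d(x,C)\ge\|x\|-R'$, where $C\subset\Bb_{R'}(0)$. This preserves the positive $\liminf$ in the definition of transversality, and the same argument applies to $\Omega\cup C$.
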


\begin{proof}
By additivity in $\Omega$ and $W$, it remains to show  $\II_{\Omega, C}(H) = \II_{C, W}(H) = 0$. Indeed, by \Cref{eqdef-3a} and \Cref{eq-2a},
\begin{align}
&\II_{\Omega, C}(H) \equiv \dim\ker( \1 - \1_{C}(1 - U_\Omega) \1_C), \\
&\II_{C, W}(H) \equiv \dim\ker(\1 - \1_W\1_C\mathcal Q_C \1_C \1_W).
\end{align}
Now since $\1_C$ is compact, both $ \1_{C}(1 - U_\Omega) \1_C$ and $\1_W\1_C\mathcal Q_C \1_C \1_W$ are $\Theta$-odd compact perturbations, of $\1$. By \Cref{prop-2a}, $\II_{\Omega, C}(H) \equiv \II_{C, W}(H) \equiv 0$.

This completes the proof.
\end{proof}

\newpage
\section{Proof of Theorem \ref{thm-2}}\label{sec-4}
\begin{proof}
    Recall that $\NN(H_e) = \II(e^{-2\pi i g(H_e)}, \1_W)$. Set
    \[
    X = g(H_e), \quad Y = \1_\Omega g(H_+) \1_\Omega + \1_{\Omega^c} g(H_-) \1_{\Omega^c}.
    \]
    By \Cref{lem-2b}, $X$, $Y$ differs polynomially away from $\p \Omega$. By \Cref{lemma-3c},
    \[
    \NN(H_e) = \II(e^{-2\pi i X}, \1_W) = \II(e^{-2\pi i Y}, \1_W).
    \]

    Furthermore, by definition of $g$ in \eqref{eqdef-edge} and $\MG$ being the bulk spectral gap, we see that
    \[
    g(H_\pm) = \1_{(-\infty, \inf\MG)}(H_\pm)=: \Pi_\pm.
    \]
    Hence we have $Y = \1_\Omega \Pi_+ \1_\Omega + \1_{\Omega^c} \Pi_- \1_{\Omega^c}$. As a result,
    \[
     e^{-2\pi i Y} = U_+U_-, \qquad U_+ = e^{-2\pi i \1_{\Omega}\Pi_+\1_{\Omega}}, \qquad U_-  = e^{-2\pi i \1_{\Omega^c}\Pi_- \1_{\Omega^c}}.
    \]
    By \Cref{lemma-3f},
        \[
        \NN(H_e) = \II(e^{-2\pi i Y}, \1_W) = \II(U_+U_-, \1_W) = \II(U_+, \1_W) + \II(U_-, \1_W).
        \]

    Note that by definition, $ \II(U_+, \1_W) = \II_{\Omega, W}(H_+)$, while
    \[
    \II(U_-, \1_W) = \II_{\Omega^c, W}(H_-) =  \II_{\Omega, W}(H_-) \equiv - \II_{\Omega, W}(H_-),
    \]
    which follows from \Cref{lemma-3e} and $x \equiv -x \mod 2$ for integer $x$. This completes the proof of the theorem:
    \[
    \NN(H_e) \equiv \II_{\Omega, W}(H_+) - \II_{\Omega,  W}(H_-).
    \]
\end{proof}

\newpage
\section{Proof of Theorem \ref{thm-3}}\label{sec-5}
The proof of \Cref{thm-3} follows closely that of \cite[\S 6, \S7]{DZ24} with several $\Z_2$-specific modifications highlighted below. More explicitly, we first prove Theorem \ref{thm-3} for simple $\Omega$ following \cite[\S 6]{DZ24}, where the intersection number $\XX_{\Omega, W}$ takes either $0$, $1$, or $-1$. Then we follow \cite[\S 7]{DZ24} to extend the results to more general transversal sets.

We start with reviewing some basic definitions from \cite[\S 6]{DZ24}:
\begin{definition}[Simple path, loop]\label{def-5a}
    A continuous map $\gamma: \R \to \R^2$ is called a simple path if
    \begin{enumerate}
        \item $\gamma$ is injective and proper (the preimage of a compact set is compact);
        \item There exists a discrete closed set $S\subset \R$ such that $\gamma$ is smooth on $\R\setminus S$.
        \item For all $t\in \R$, the left and right derivative of $\gamma$ at $t$ exist and have norm $1$.
    \end{enumerate}
    If $(ii)$, $(iii)$ hold but $\gamma$ is periodic and injective over its period, we call $\gamma$ a simple loop.
\end{definition}

\begin{definition}[Simple set]\label{def-5b}
    An open set $A\subset \R^2$ is simple if it is connected and its boundary is the range of a simple path or loop.
\end{definition}

\begin{definition}\label{def-5c}
    Assume $\Omega$, $W$ are transverse and $\Omega$ is simple. If $\p \Omega$ is bounded, $\XX_{\Omega, W} = 0$. If $\p \Omega$ is unbounded, let $\gamma$ be a simple path with range $\p \Omega$ such that $\Omega$ sits to the left of $\gamma$. The intersection number is defined as
\[
\XX_{\Omega, W} := \XX_+(\Omega, W) - \XX_-(\Omega, W), \qquad \XX_\pm(\Omega, W)  := \lim_{t\in \pm \infty} \1_V \circ \gamma(t).
\]
\end{definition}

\subsection[When Omega is simple]{\texorpdfstring{When $\Omega$ is simple}{When Omega is simple}}\label{subsec-5.1}
When $\Omega$ is simple, $\XX_{\Omega, W}$ is either $0$, $1$, $-1$. We prove \Cref{thm-3} for each case below:

\begin{proof}[Proof of \Cref{thm-3} when $\Omega$ is simple and $\XX_{\Omega, W} = 0$.]
By \Cref{def-5c}, when $\Omega$ is simple, $\XX_{\Omega, W}$ takes either $0$, $1$ or $-1$.

\textbf{1.} We first prove $\II_{\Omega, W}(H) = 0$ when $\XX_{\Omega, W} = 0$, i.e. $\gamma$ begins and ends both in $W$ or $W^c$. By \Cref{lemma-3e}, we can assume without loss of generality the latter case, i.e., $\p \Omega$ starts and ends in $W^c$. In this case, there exists a large enough $R$ such that $\Omega \subset W^c\cup B_R$. By \Cref{lemma-3e} and \Cref{lemma-3h},
\[
\II_{\Omega, W}(H) = \II_{\Omega, W^c}(H) = \II_{\Omega, W^c\cup B_R}(H).
\]
Hence, without loss of generality, we can assume $\Omega \subset W^c$. Then
\begin{align}
    \label{eq-5a}
\II_{\Omega, W}(H) \equiv \dim\ker(\1_{W^C} + \1_W U_\Omega \1_W) = \dim\ker(\1 + \1_W(U_\Omega - \1)\1_W).
\end{align}
Similar to \Cref{eq-2a}, we see that
\[
\1 - U_\Omega = \1 - e^{-2\pi i \1_\Omega \Pi \1_\Omega} = \1_\Omega \QQ_\Omega \1_\Omega
\]
for some operator $\QQ_\Omega$. Plug into \eqref{eq-5a} and notice $\1_W\1_\Omega = 0$ since $\Omega \subset W^c$, we obtain
\[
\II_{\Omega, W}(H) \equiv\dim\ker(\1) = 0.
\]
\end{proof}

\begin{proof}[Proof of Theorem \ref{thm-3} when $\Omega$ is simple and $\XX_{\Omega, W} = 1$.]

\textbf{1.} Recall \cite[Proposition 5]{DZ24}: for $(\Omega,W)$ simple and transverse with $\chi_{\Omega,W}=1$, there exists $\Omega_n, W_n$, see \Cref{fig:P12}, with the following properties:
\begin{enumerate}[label=($\mathcal{P}$\arabic*)]
    \item \label{P1} $\Omega \Delta \Omega_n$ and $W \Delta W_n$ are compact (where $\Delta$ denotes the symmetric difference)
    \item \label{P2} $\{ (\Omega_n, W_n), n \in \N \}$ is uniformly transverse: there exists $c \in (0,1)$ such that for all $n$,
    \begin{equation}\label{eq:2q}
        \forall |x| \geq c^{-1}, \qquad \Psi_{\Omega_n,W_n}(x) \geq |x|^c.
    \end{equation}
    \item \label{P3} $\Omega_n \cap \Bb_{8n}(0) = \{x_2 > 0\} \cap \Bb_{8n}(0)$ and $W_n \cap \Bb_{8n}(0) = \{x_1 > 0\} \cap \Bb_{8n}(0)$.
\end{enumerate}
In particular, by \ref{P1} and using the invariance of $\II$ under compact perturbations -- \Cref{lemma-3h} -- we have:
\begin{equation}\label{eq:2p}
    \II_{\Omega,W}(H) \equiv \II_{\Omega_n,W_n}(H).
\end{equation}
\begin{figure}[t]
  \centering
  \safeincludegraphics[width=0.75\textwidth]{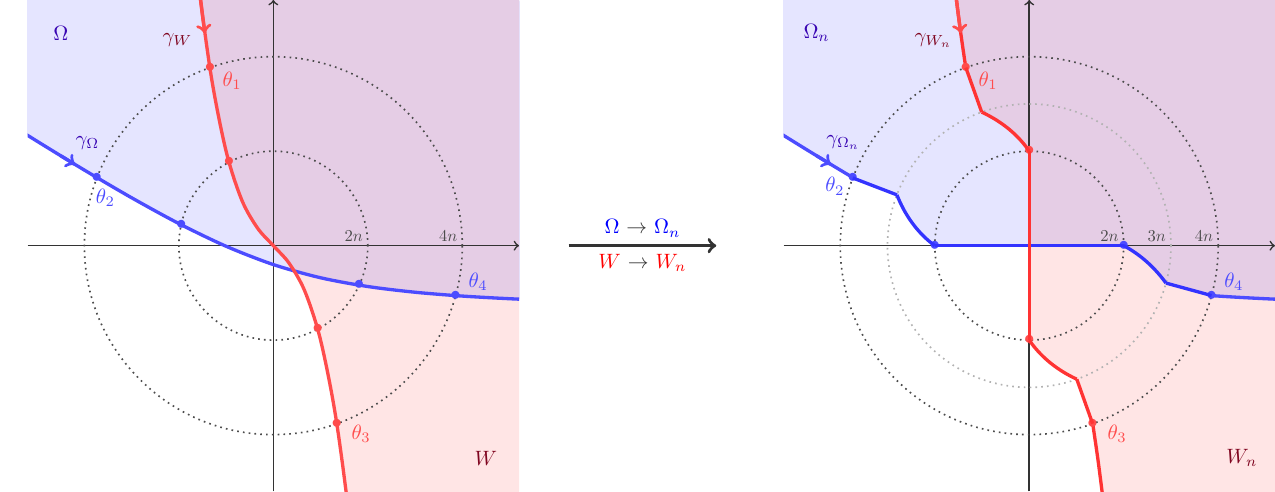}
  \caption{Deforming $\Omega, W$ to $\Omega_n, W_n$. Blue (red) region represents $\Omega$ ($W$) and $\Omega_n$ ($W_n$).}
  \label{fig:P12}
\end{figure}

\textbf{2.} Meanwhile we claim, when $n$ is large enough, $\II_{\Omega_n, W_n}(H_\pm) = \II_{\Hh_2, \Hh_1}(H_\pm)$. Recall by \eqref{eqdef-3b}, $\II_{\Omega, W}(H) = \II(U_\Omega, \1_W) \equiv \dim\ker(\1 - \AAA_{\Omega, W})$ where $\AAA_{\Omega, W}= U_\Omega^*[U_\Omega, \1_W]$. By \Cref{rmk-3a}, $\AAA_{\Omega, W}$ decay polynomially w.r.t. $\p \Omega$ and $\p W$. This implies, by \Cref{def:trans sets}, for any $N$, there is $C_N$ such that
\begin{align}
    \big| A_{\Omega_n,W_n}(x,y) \big| & \leq C_N \Psi_{\Omega_n, W_n}(x)^{-N}\Psi_{\Omega_n, W_n}(y)^{-N},
    \\
    \big| A_{\Hh_2,\Hh_1}(x,y) \big| & \leq C_N|x|^{-N}|y|^{-N},
\end{align}
where we used that $\Psi_{\Hh_1,\Hh_2}(x) = |x|$.
In particular, when $n \geq c^{-1}$, where $c \in (0,1)$ satisfies \eqref{eq:2q}, and $x,y \notin \Bb_n(0)$, by triangle inequality, we have
\begin{equation}\label{eq:2n}
    \big| A_{\Omega_n,W_n}(x,y) - A_{\Hh_2,\Hh_1}(x,y) \big| \leq C_N|x|^{-N}|y|^{-N} + C_N |x|^{-cN}|y|^{-cN} \leq 2C_N |x|^{-cN} |y|^{-cN}.
\end{equation}

\textbf{3.} On the other hand, as a consequence of \ref{P3}, we have
\begin{equation}\label{eq:2o}
   \forall x,y \in \Bb_n(0), \qquad \big| A_{\Omega_n,W_n}(x,y) - A_{\Hh_2,\Hh_1}(x,y) \big| \leq \mu^{-1}e^{-\mu n}.
\end{equation}
Indeed, given two sets $X, Y\subset \Z^2$, consider
\begin{align*}
A_{X,Y}-A_{X\cap B_n(0),Y\cap B_n(0)}
&=(U_X-U_{X\cap B_n(0)})^*[U_X,\1_Y]\\
&\quad+U_{X\cap B_n(0)}^*[U_X-U_{X\cap B_n(0)},\1_Y]\\
&\quad+U_{X\cap B_n(0)}^*[U_{X\cap B_n(0)}^*,\1_{Y\setminus B_n(0)}].
\end{align*}

\textbf{4.} Combining \eqref{eq:2n} and \eqref{eq:2o} imply that for $n \geq c^{-1}$:
\begin{align}
    \big\| A_{\Omega_n,W_n} - A_{\Hh_2,\Hh_1} \big\|_\Tr & \leq \sum_{x,y \in \Z^2} \big| A_{\Omega_n,W_n}(x,y) - A_{\Hh_2,\Hh_1}(x,y) \big|
    \\ &
    \leq \sum_{x,y \in \Bb_n(0)} \mu^{-1}e^{-\mu n} + \sum_{\substack{ x \notin \Bb_n(0) \\ \text{or } y \notin \Bb_n(0)}} e^{-\mu|x|^c-\mu|y|^c}
    \\ &
    \leq \mu^{-1} n^2 e^{-\mu n} + \mu^{-1} e^{-\mu n^c} \sum_{x \in \Z^2} e^{-\mu |x|^c}.
\end{align}
Because the above series is finite, we deduce that $A_{\Omega_n,W_n}$ converges in trace-class norm to $A_{\Hh_2,\Hh_1}$ as $n \rightarrow \infty$. In particular, we have for $n$ sufficiently large:
\begin{equation}
    \big\| A_{\Omega_n,W_n} - A_{\Hh_2,\Hh_1} \big\|_\Tr \leq \dfrac{2^{-10}}{(\| A_{\Hh_2,\Hh_1} \|_\HS^2+1)^2}.
\end{equation}
From \Cref{prop-2b}, we deduce that
$\II_{\Omega_n,W_n}(H_\pm) = \II_{\Hh_2,\Hh_1}(H_\pm) = \II(H_\pm)$. By \eqref{eq:2p}, we conclude that $\II_{\Omega,W}(H_\pm) = \II(H_\pm)$. This completes the proof of the case $\XX_{\Omega,W} = 1$.

\end{proof}

\begin{proof}[Proof of Theorem \ref{thm-3} when $\Omega$ is simple and $\XX_{\Omega, W} = -1$.]
    When $\XX_{\Omega, W}=-1$, we need to show $\II_{\Omega, W}(H_\pm) \equiv -\II(H_\pm) \equiv \II(H_\pm)$. Since passing to $\Omega^c$ reverses the orientation, $\XX_{\Omega^c,W}=1$. By the proof above, $\II_{\Omega^c, W}(H_\pm) = \II(H_\pm)$. By \Cref{lemma-3e}, $\II_{\Omega, W} = \II_{\Omega^c, W}$. Hence
    \[
    \II_{\Omega, W}(H_\pm) = \II_{\Omega^c, W}(H_\pm) = \II(H_\pm).    \]
\end{proof}

\subsection{Extension to general transversal sets}\label{subsec-5.2}
The extension of \Cref{thm-3} from simple transverse sets to general transverse sets follows the argument of \cite[Sec. 7]{DZ24}. The required inputs are the complement identity of \Cref{lemma-3e}, compact-perturbation invariance of \Cref{lemma-3h}, and additivity in $\Omega$ and $W$ from \Cref{lemma-3f,lemma-3g}.

We recall the outline. We say
\begin{definition}\label{def:4} An open subset $A$ of $\R^2$ is good if:
\begin{itemize}
    \item[(a)] The set $\Int A^c$ has boundary $\p A$;
    \item[(b)] The connected components $\{ \Gamma_k : k \in \N \}$ of $\p A$ are the ranges of simple paths or loops and satisfy $\inf\limits_{j \neq k} d(\Gamma_j, \Gamma_k) > 0$.
    \item[(c)] $\p A$ does not intersect $\Z^2$.
\end{itemize}
\end{definition}
These are sets with not-too-pathological boundaries, see \cite[Sec. 7.1]{DZ24} for basic properties of such sets. It turns out one can extend naturally the definition of intersection number $\XX_{\Omega, W}$ from transversal simple sets to transversal good sets, see \cite[Sec. 7.2]{DZ24}, which, roughly speaking, counts the signed number of times oriented boundary of $\Omega$ enters $W$.

With this definition, we can extend \Cref{thm-3} step by step, following the proof in \cite[Sec. 7.5]{DZ24}, from transverse simple sets to transverse good sets:
\begin{itemize}
    \item[A.] $U$ and $V$ are simple.
    \item[B.] $U$ is simple and $V$ is good and connected.
    \item[C.] $U$ is simple and $V$ is good.
    \item[D.] $U$ is good and connected and $V$ is good.
    \item[E.] And finally, $U$ and $V$ are good.
\end{itemize}

Finally, given any transverse sets $\Omega$, $W\subset \Z^2$, we proved in \cite[Lemma 7.1]{DZ24} that there exist transverse good sets $\overline{\Omega}$, $\overline{W}\subset \R^2$ such that $\overline{\Omega} \cap \Z^2 = \Omega$ and $\overline{W} \cap \Z^2 = W$. As a result,
\[
\II_{\Omega, W}(H_\pm) = \II_{\overline{\Omega}, \overline{W}}(H_\pm) = \XX_{\Omega, W}\II(H_\pm).
\]
This allows us to extend \Cref{thm-3} from transverse good sets to general transversal sets.

This completes the proof of \Cref{thm-3}.

\bibliographystyle{amsalpha}
\bibliography{references}

\end{document}